\documentclass[a4paper,12pt,envcountsect,envcountresetsect,envcountsame]{jpsvmult}
\usepackage{times}
\usepackage{amssymb}
\usepackage{epsfig}
\usepackage{enumerate}
\usepackage{mathrsfs}
\let\rho=\varrho

\let\epsilon=\varepsilon

\def\sect#1#2{\section{#1}\label{#2}}
\def\subsect#1#2{\subsection{#1}\label{#2}}
\def\sref#1{Sect.~\ref{#1}}
\def\lref#1{Lemma~\ref{#1}}
\def\fref#1{Fig.~\ref{#1}}
\def\eref#1{(\ref{#1})}
\def\T{{\cal T}}
\def\TT{{\mathbb T}}
\def\ZZ{{\mathbb Z}}
\def\natural{{\mathbb N}}
\def\integer{{\mathbb Z}}
\def\OO{{\cal O}}
\def\GG{\cal G}
\def\ie{{\it i.e.}}
\def\eg{{\it e.g.}}
\def\const{{\rm const.}}
\usepackage{MHequ}
  {\setlength\paperheight {297mm}%
    \setlength\paperwidth  {210mm}}

\begin{document}
\title*{A Topological Glass}
\bigskip
\date{}

\author{Jean-Pierre Eckmann}
\institute{D\'epartement de Physique Th\'eorique et\\
Section de Math\'ematiques\\
Universit\'e de Gen\`eve\\
1211 Gen\`eve 4, Switzerland}
\maketitle
\begin{abstract}
  We propose and study a model with glassy behavior. The state space
  of the model is given by all triangulations of a sphere with $n$
  nodes, half of which are red and half are blue. Red nodes want to
  have 5 neighbors while blue ones want 7. Energies of nodes with
  different numbers of neighbors are supposed to be positive. 
The dynamics is that of
  flipping the diagonal of two adjacent triangles, with a temperature
  dependent probability. We show that this system has an approach to a
  steady state which is exponentially slow, and show that the
  stationary state is unordered. We also study the local energy
  landscape and show that it has the hierarchical structure known from
  spin glasses. Finally, we show that the evolution can be described as that of
  a rarefied gas with spontaneous generation of particles and annihilating collisions.
\end{abstract}
\thispagestyle{empty}
\sect{Introduction}{intro}
Our model of a glass can is inspired by 
an abstraction of a model and its
representation found in \cite{Proglass2007}. Before we introduce our version,
and to make contact with that paper, we describe briefly the model of \cite{Proglass2007}.
One starts with a binary mixture of $n$ disks in the plane, with the small
disks having radius $\sigma_{\rm s}=1$ and the large, $\sigma_{\rm \ell}=1.4$. The three pairwise additive
interactions are given by  purely repulsive soft-core potentials 
of the form
$$
\epsilon \left(\frac{\sigma_a + \sigma_b}{2r}\right)^{12}~,
$$
with $a,b\in\{{\rm s},{\rm \ell}\}$, $\epsilon >0$, and $r$ the distance between the
  centers of the disks. One assumes the interaction vanishes for $r>
  2.25(\sigma_a+\sigma_b)$. 
Taking periodic boundary conditions, and a relatively tight volume,
the authors of \cite{Proglass2007} found that this system shows the
characteristics of a glass when the temperature is sufficiently low.

The part of the analysis which is of interest for the present study
has to do with a geometric representation of configurations of
this system. One draws in the plane a point for the
position of the centers of the disks, and proceeds then to use the
Voronoi tessellation. This means that lines are drawn between nearest
neighbors in the sample, and their normal bisectors are then used to
draw polygons around each particle.
\begin{figure}[!ht]
  \centerline{\psfig{file=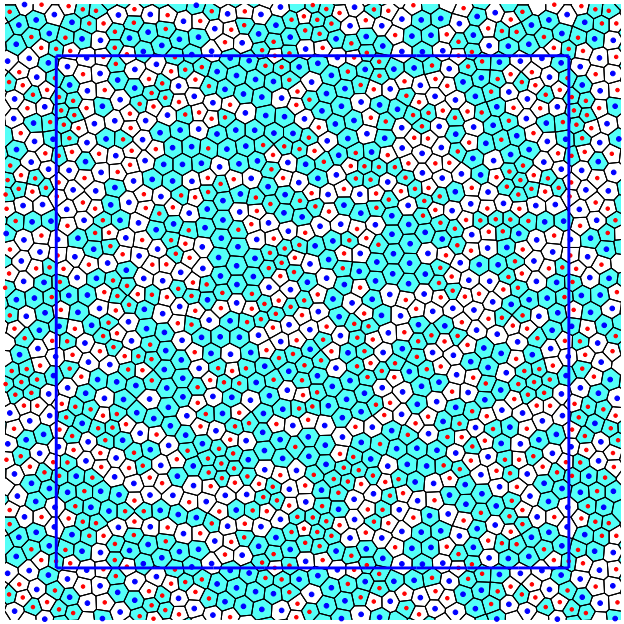,width=8.2cm}}
\caption{A Voronoi tessellation (adapted from \cite{Proglass2007}). The
  interpretation of the figure is as follows: Blue dots correspond to
  the positions of the big particles and red dots to the small ones. The
  polygons are white, for every blue particle in a heptagon and every
  red particle in a pentagon. All hexagons are in blue-green. There
  are no other polygons at this temperature ($T=0.1$, see \cite{Proglass2007}). The blue line shows the
  boundary of the domain, which is extended periodically beyond.
}\label{f:pro}
\end{figure}
These polygons turn out to be mostly pentagons around the small
particles
and heptagons around the large ones, with a few exceptions as a
function of temperature, and by the constraints of Euler's theorem on
polygonal domains on surfaces. See \fref{f:pro}.

In this paper, we consider a purely topological variant of the above
model and show it has the properties of a glass. The model is
basically obtained by considering instead of the Voronoi tessellation
the \emph{dual} graph, which is a triangulation (of the torus). This
triangulation is obtained from \fref{f:pro} by connecting the centers
of the particles normally across the edges of the Voronoi
tessellation.
(The dual graph is a triangulation since each edge of the polygons meets exactly 2 other edges at its
end.)

To simplify things further, we consider instead a triangulation $\T$
of the sphere. Given $n$, the number of ``particles,'' we let $\T$
denote a triangulation of the sphere with $n$ nodes and we let $\TT_{n,0}$
denote the set of 
all such triangulations. By this, one means the set of all combinatorially
distinct rooted simplicial 3-polytopes. In particular, a triangulation
should not have any ``double edges''. We further refine the
definition, by distinguishing 2 types of nodes in the triangulation:
We first number the nodes from 1 to $n$ and then define 
2 types of nodes.
Those with even index are the ``small particles'' and those with odd
index the ``large'' ones.  This means that the triangulation has the same
number of large and small nodes (up to a difference of one). We will
also call the two types of nodes two \emph{colors}.
\begin{definition}
  We shall call the odd nodes \emph{blue} and the even ones \emph{red}
  and will refer to the triangulations as \emph{colored} triangulations.
\end{definition}

Once the colors are assigned, the numbers are again forgotten.
The set
of all colored triangulations with $n$ nodes will be called $\TT_{n}$.
This is our \emph{phase space} and the dynamics is mapping points in this
phase space to other points.

As in the original model, we introduce an energy $E$ for each
triangulation $\T$:
\begin{equ}\label{e:energy}
  E(\T)=\sum_{i=\rm{odd}} (d_i-7)^2 + 
\sum_{i=\rm{even\vphantom{d}}} (d_i-5)^2 ~,
\end{equ}
where $d_i$ is the number of links meeting at node $i$. 
Except for the
constraint given by Euler, that $\sum_{i=1}^n d_i = 6n-12$, the lower
bound of the energy is obviously $E(\T)\ge0$.
\begin{remark}
  The choice of energy is not as ``universal'' as one could wish. In a
  way, it would be more adequate to be able to develop a theory which
  deals with a family of energies, which all have the property that
  the minima are at again at 5 and 7, but which should somehow be independent
  of the details of how the errors are weighted. For example, one
  expects similar results for an energy of the form
 \begin{equ}
  E(\T)=\sum_{i=\rm{odd}} (d_i-7)^2 + 
2\sum_{i=\rm{even\vphantom{d}}} (d_i-5)^4 ~.
\end{equ}
On the positive side, we will see that the hierarchical structure of
the energy landscape indeed does not depend on the details of the
energies, only one their behavior near the quadratic minima.

\end{remark}

We next define a \emph{dynamics} on the set of all triangulations $\TT_{n}$
which is inspired by the motion of the dual to the Voronoi tessellation
of the model in \cite{Proglass2007}.
\begin{itemize}
  
\item[1)] Choose a link at random.
\item[2)] Consider the two triangles touching that link, say ABC and
  BCD (having the common link BC).
If the link AD exists in the triangulation go back to 1). (This can
happen when ABCD form a tetrahedron.)
\item[3)] In principle, we want to flip the link BC and replace it by AD,
\ie, exchange the triangles ABC and BCD
  with ACD and ABD. This operation is called a flip
  \cite{Negami1999}, or a Gross-Varsted move \cite{Malyshev1999} and
  it will transform the triangulation $\T$ to a new one $\T'$. 
\item[4)] If $E(\T')\le E(\T)$ then perform the flip and continue at 1).
\item[5)] If $E(\T')>E(\T)$ then perform a flip with probability
  $e^{-\beta (E(\T')-E(\T))}$ and continue at 1). This is of course a
  typical Monte-Carlo step (at inverse temperature $\beta $).
\item[6)] Continue at 1).
\end{itemize}

\sect{Irreducibility}{irreducibility}

Having defined precisely the algorithm, we first show that the phase
space $\TT_n$ is irreducible.
\begin{lemma}\label{l:irreducibility}
  The action described above defines an irreducible Markov process on $\TT_{n}$ (when $n>7$). This
  means that any 
  configuration can be reached from any other configuration. 
\end{lemma}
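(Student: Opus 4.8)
The plan is to reduce the statement to a purely combinatorial connectivity property of the flip graph on $\TT_n$. First I would observe that, for irreducibility, the only thing that matters is whether a given flip can occur with strictly positive probability. Whenever the two triangles $ABC$ and $BCD$ share the edge $BC$ and the edge $AD$ is absent, the move is attempted, and by steps 4) and 5) it is performed with probability $1$ if $E(\T')\le E(\T)$ and with probability $e^{-\beta(E(\T')-E(\T))}>0$ otherwise. Hence every admissible flip has positive probability, and the process is irreducible if and only if the \emph{flip graph} on $\TT_n$ — vertices the colored triangulations, an edge for each admissible flip — is connected. Since a flip is involutive (re-flipping $AD$ restores $BC$, and this reverse flip is again admissible), it suffices to exhibit, for an arbitrary colored triangulation, a flip path to one fixed \emph{canonical} colored triangulation depending only on $n$ and on the numbers of red and blue nodes.

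Second, I would dispose of the uncolored combinatorics by the classical theorem that any two triangulations of the sphere with the same number of nodes are connected by diagonal flips (see \cite{Negami1999}). The standard argument raises the degree of a chosen node $v$ to the maximum $n-1$ and then fan-triangulates the remaining disk, producing a ``double wheel'': two poles of degree $n-1$ joined to a path of equatorial nodes. Carried out on a colored triangulation, this brings every configuration to a double wheel, but the \emph{cyclic arrangement of colors} along the equator is inherited from the starting point and is not yet controlled. The remaining — and genuinely new — task is therefore to show that any two double wheels with the same color counts are flip-connected.

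Third, and this is the heart of the proof, I would establish a \emph{color-transposition} gadget: a bounded sequence of admissible flips that, applied near two adjacent equatorial nodes $p$ and $q$, returns the same combinatorial double wheel with the positions (hence the colors) of $p$ and $q$ interchanged, while fixing all other nodes. Granting such a gadget, adjacent transpositions generate all permutations of the equatorial color sequence, and since the reduction of the second step is free to choose which node is raised to a pole, every color arrangement — in particular a single canonical one depending only on the red/blue counts — becomes reachable. Combined with reversibility, this yields connectivity of $\TT_n$.

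The main obstacle is constructing this gadget while respecting admissibility — no flip may recreate an existing edge, and no double edges may appear — since the poles have very high degree and many candidate diagonals are already present and thus blocked. This is precisely where the hypothesis $n>7$ enters: it guarantees enough equatorial room to route the transposition through intermediate triangulations in which the needed diagonals are genuinely absent, and it excludes the degenerate small cases (such as the tetrahedron, where no flip is ever admissible). I would finish by verifying the gadget through an explicit finite local computation, checking at each step that the newly created edge does not already occur.
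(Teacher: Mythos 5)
Your strategy coincides with the paper's, down to the choice of canonical form: the ``double wheel'' you describe (two poles of high degree joined along a path of equatorial nodes) is essentially the paper's christmas tree (\fref{f:christmas1}), and the paper argues exactly as you propose --- (i) quote the classical uncolored flip-connectivity theorem \cite{Negami1999,Wagner1936} to bring any triangulation to this canonical form, then (ii) realize adjacent color transpositions inside the canonical form by short sequences of admissible flips, which generate all permutations of the colors. Your preliminary reductions (every admissible flip has positive probability; flips are reversible; hence it suffices to connect every colored triangulation to one canonical colored configuration) are correct and match what the paper leaves implicit. One difference of detail: you control the colors of the poles by exploiting the freedom of which nodes are raised to pole position during the reduction, whereas the paper uses a re-rooting trick --- any triangle of a sphere triangulation may be declared the outermost one, so exchanges involving the outer triangle are reduced to stem exchanges by re-running the reduction from a new root. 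Both devices are legitimate.

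The gap is that the step you yourself call the heart of the proof --- the color-transposition gadget --- is never constructed; you defer it to ``an explicit finite local computation.'' But that computation \emph{is} the proof: everything else in your argument is either the quoted classical theorem or soft reversibility remarks, so the lemma's entire new content rests on an unverified assertion. The paper's proof consists precisely of exhibiting these gadgets: an explicit four-flip sequence exchanging the colors of two adjacent stem nodes (\fref{bluered1}), which works at every stem position except the bottom two, together with a separate explicit sequence for the bottom pair (\fref{bluered2}) --- the case split being forced by exactly the admissibility obstruction you identify, namely that near the poles many candidate diagonals already exist and the corresponding flips are blocked. So your plan is sound and completable (the paper completes it), but as written it is not yet a proof; closing it requires writing down the flip sequences and checking, move by move, that no flip recreates an existing edge, and also verifying the two boundary cases (the positions adjacent to the poles) that your uniform ``adjacent transposition'' language glosses over.
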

\begin{proof}
For the case of the uncolored triangulations, $\TT_{n,0}$, this is a
well-known result \cite{Negami1999,Wagner1936}, see also \cite{CE2005}. In that case, one shows that every
triangulation can be transformed by a sequence of flips to the
``christmas tree'' of \fref{f:christmas1}. Note that this reduction
takes place \emph{without} shifting around the nodes of the outermost
triangle.
A typical sample move to achieve the reduction of the number of links
at the top is shown in \fref{f:moves}.
\begin{figure}[!htb]
 \centerline{\psfig{file=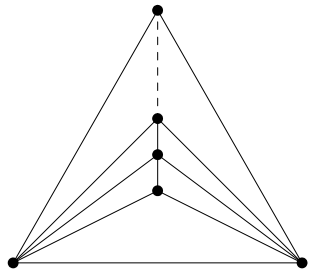,width=4cm}}
\caption{The Christmas tree. Two nodes are at the bottom, the others
  (only 4 shown) are in the stem of the tree.}
\label{f:christmas1}
\end{figure}
\def\y{\kern2em}
\def\x{4cm}
\begin{figure}[!htb]
\centerline{\psfig{file=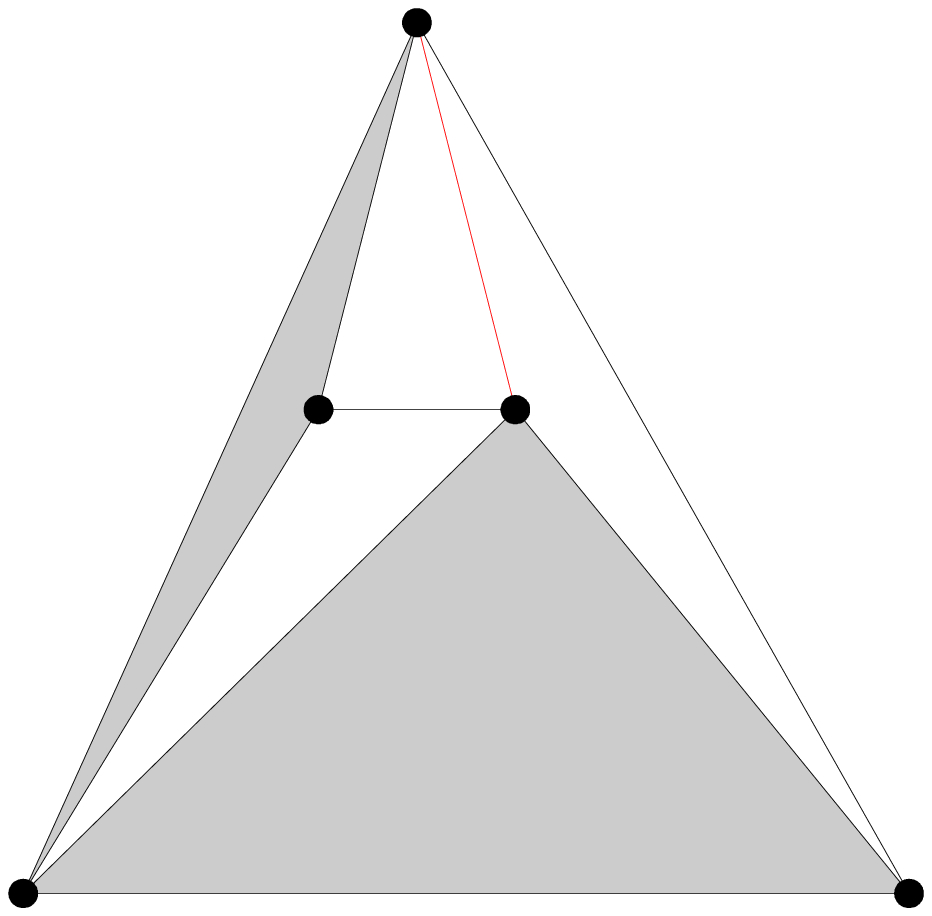,width=\x}\y
\psfig{file=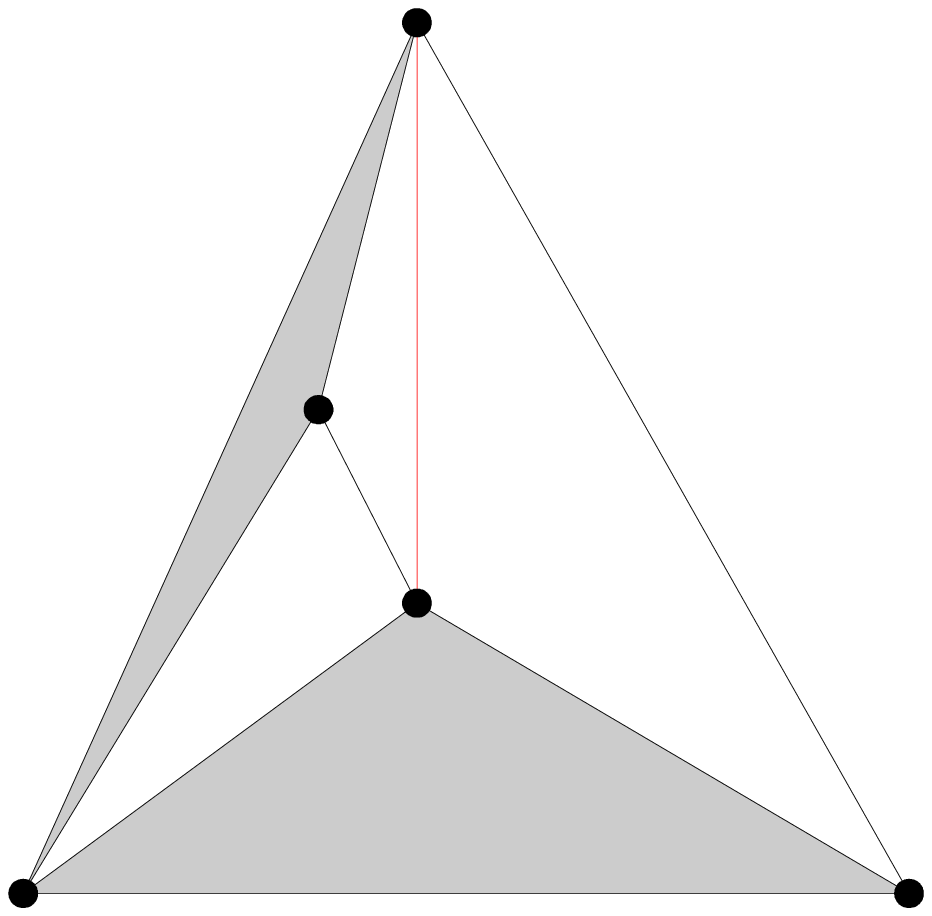,width=\x}}
\medskip
\centerline{\psfig{file=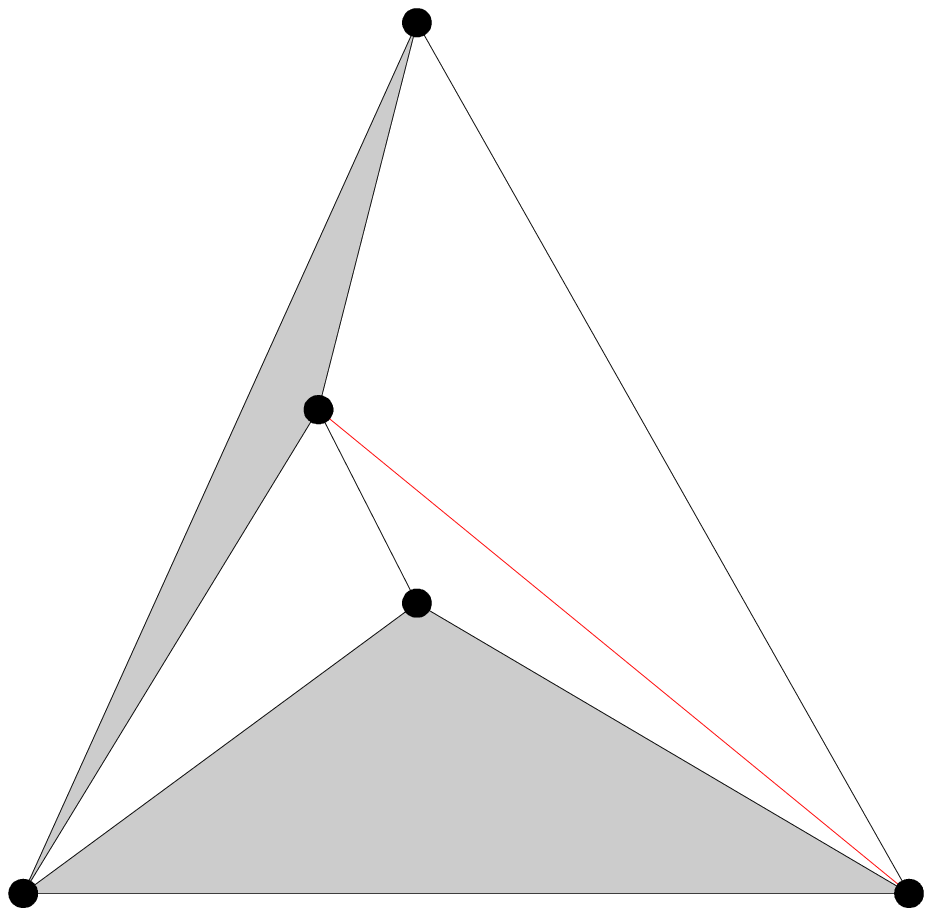,width=\x}\y
\psfig{file=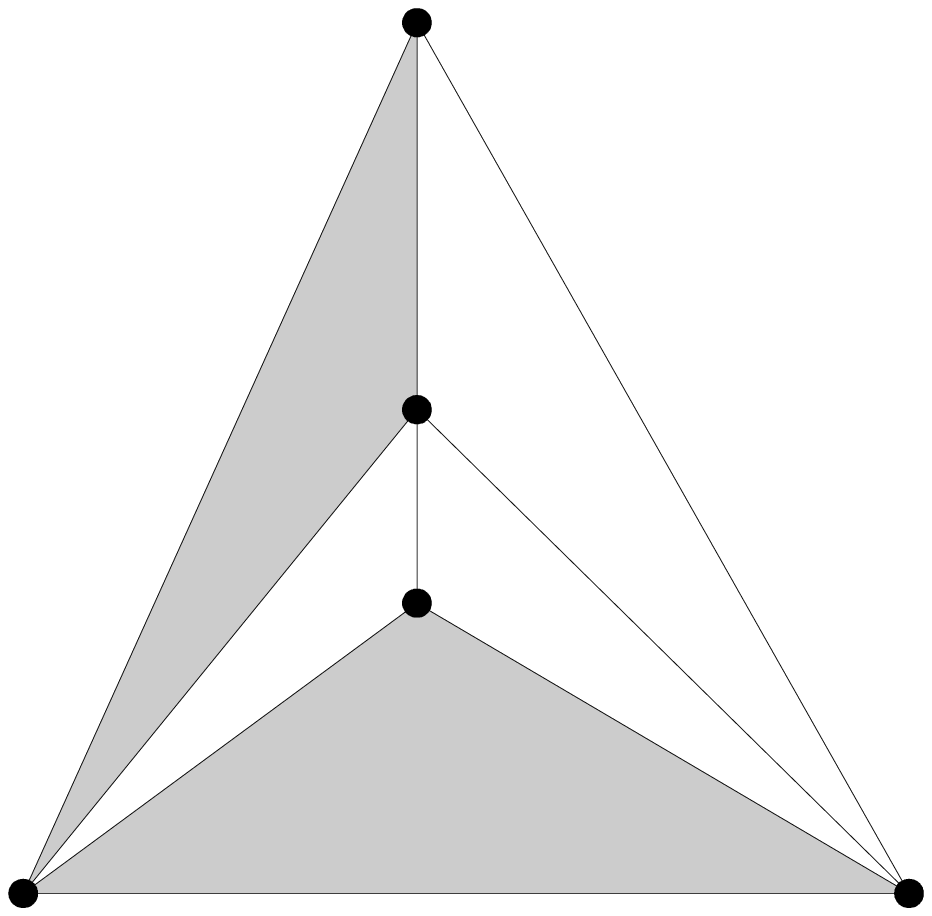,width=\x}
}
\caption{Reducing the degree at the top of the christmas tree by
  1. Note that the gray regions can contain any triangulation. The
  complete christmas tree is obtained by inductively working ``down''
  from the top to the bottom, once the top node has been reduced to
  degree 3. The sequence of events is top left, top
  right, bottom left, and bottom right. The actual flip is done between
  frame 2 and 3. The other transformations just move the nodes into
  place for better visibility.}
  \label{f:moves}
\end{figure}

We next show irreducibility
in $\TT_{n}$, where the positions of
the colors now matter. Since we can move any triangulation to a
christmas tree, we therefore only need to show that one can
exchange the places of the colors in the christmas tree. We first show
in \fref{bluered1} how this is done for any two positions on the stem,
with the exception of the two 
bottom ones. Those are handled by the flips shown in \fref{bluered2}. Finally, one can
``exchange'' colors in the outermost triangle as follows: Observe that
the ``outermost'' triangle is arbitrary since one can declare any
triangle of the triangulation of the sphere to be the outermost
one. In other words, we can reduce the question to the preceding ones
by declaring a new triangle to be the outermost one, bring the
triangulation to the christmas tree form and proceeding as before.\qed
\end{proof}
\begin{remark}
  One can also show, see \cite{CE2005}, that the Markov process
  defined above is aperiodic, \ie, some power of the transition matrix
  has only non-zero entries.
\end{remark}
\def\x{4cm}
\def\y{\kern2em}
\begin{figure}[htb]
  \centerline{
\psfig{file=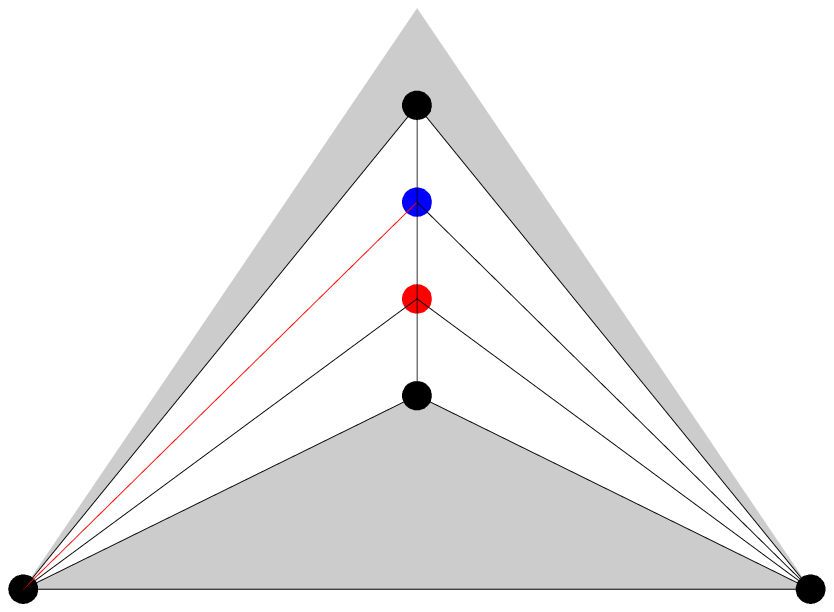,width=\x}\y
\psfig{file=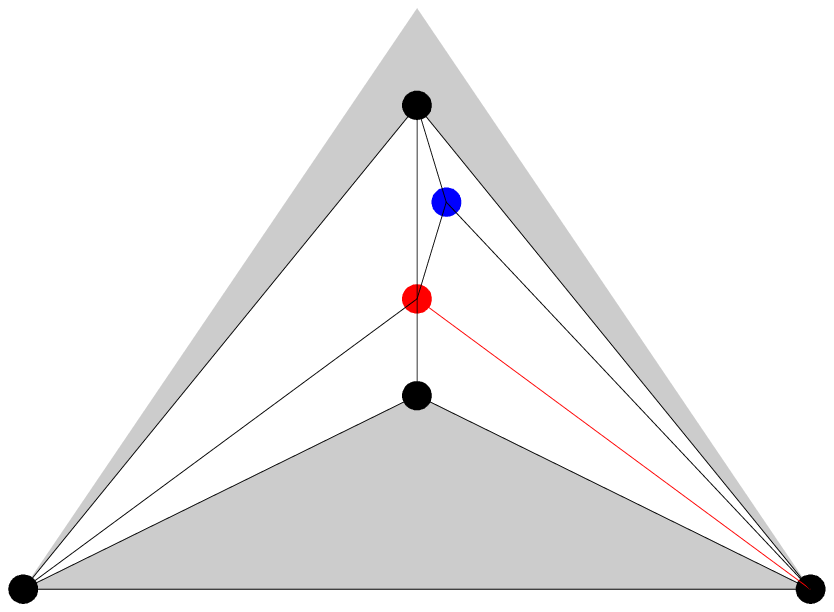,width=\x}\y 
\psfig{file=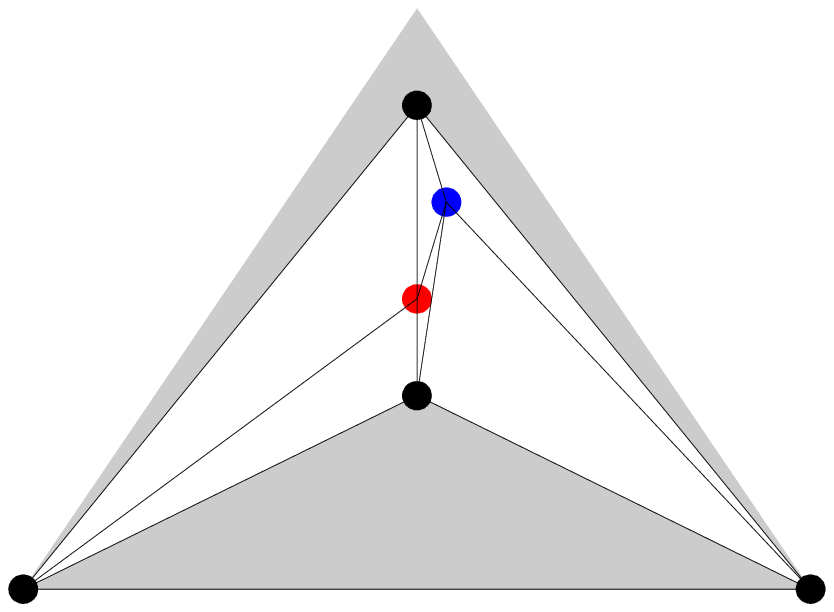,width=\x}}
\centerline{
\psfig{file=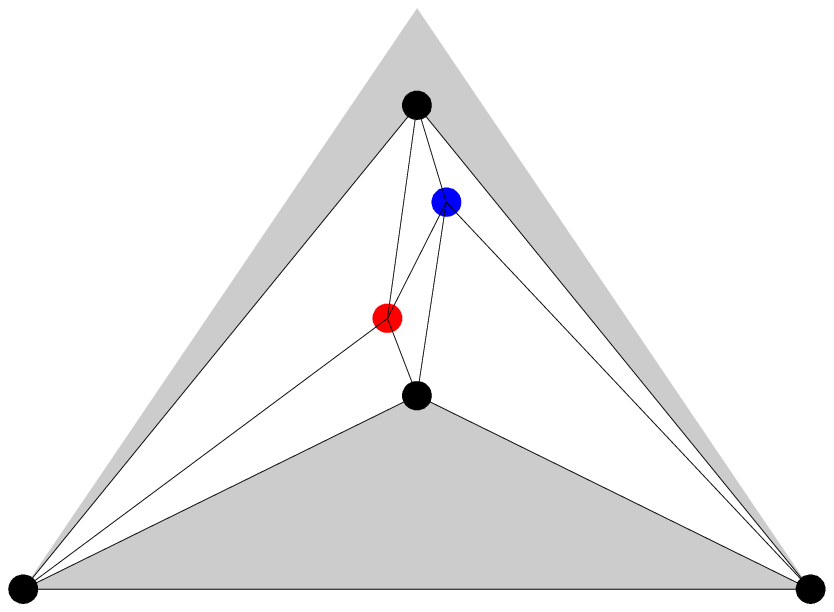,width=\x}\y 
\psfig{file=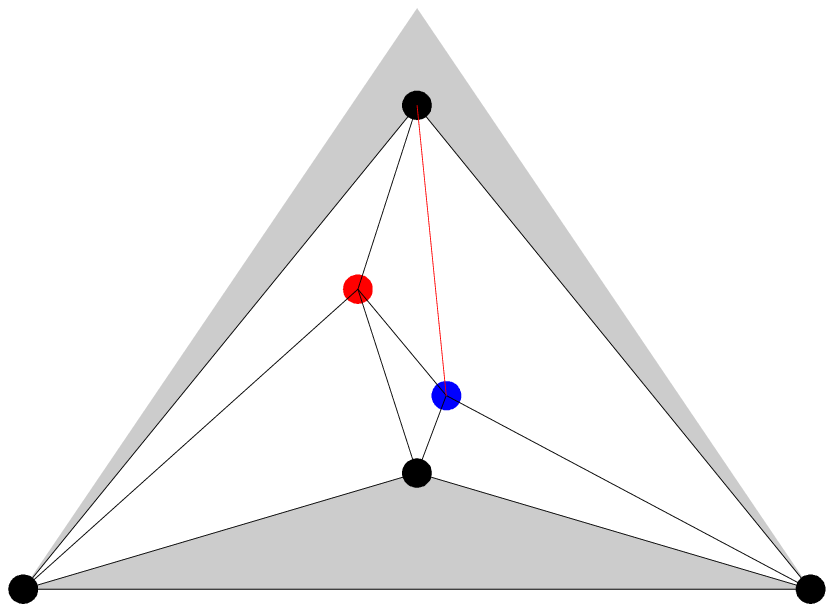,width=\x}\y 
\psfig{file=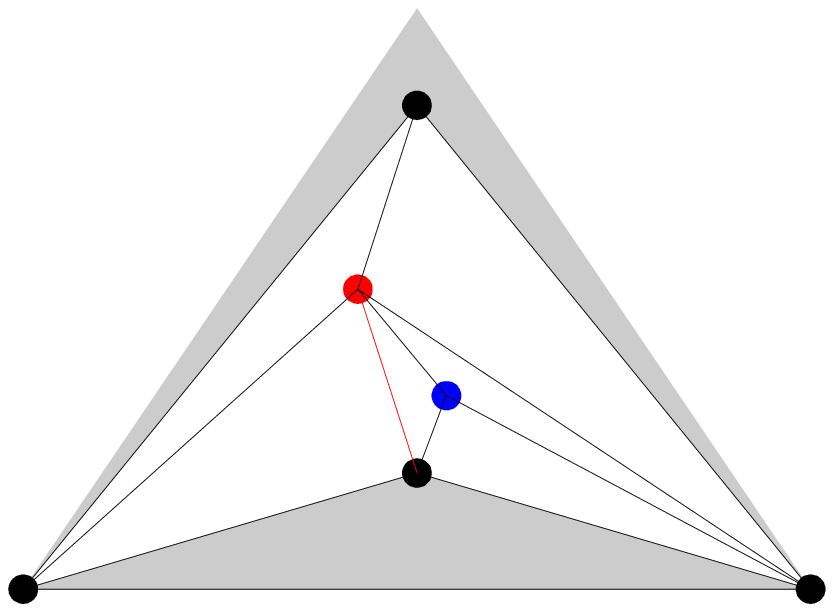,width=\x}}
\centerline{
\psfig{file=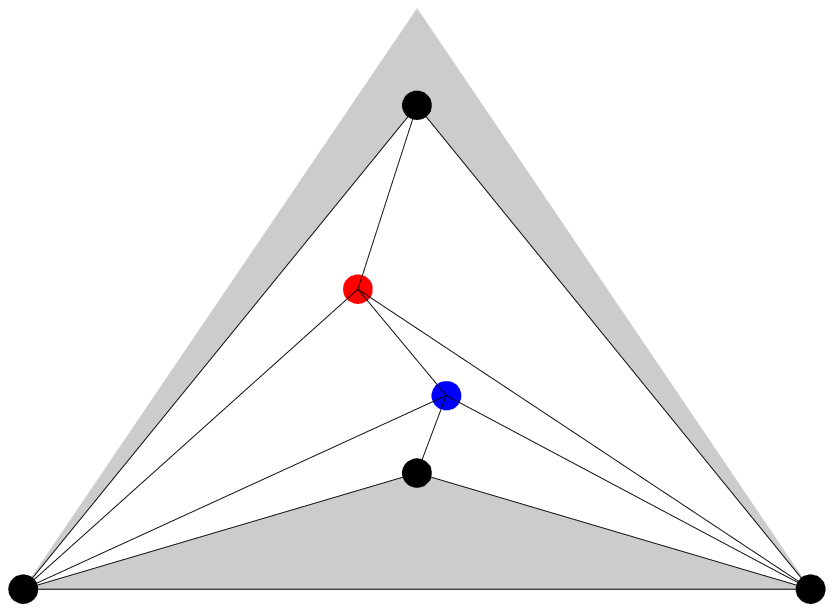,width=\x}\y 
\psfig{file=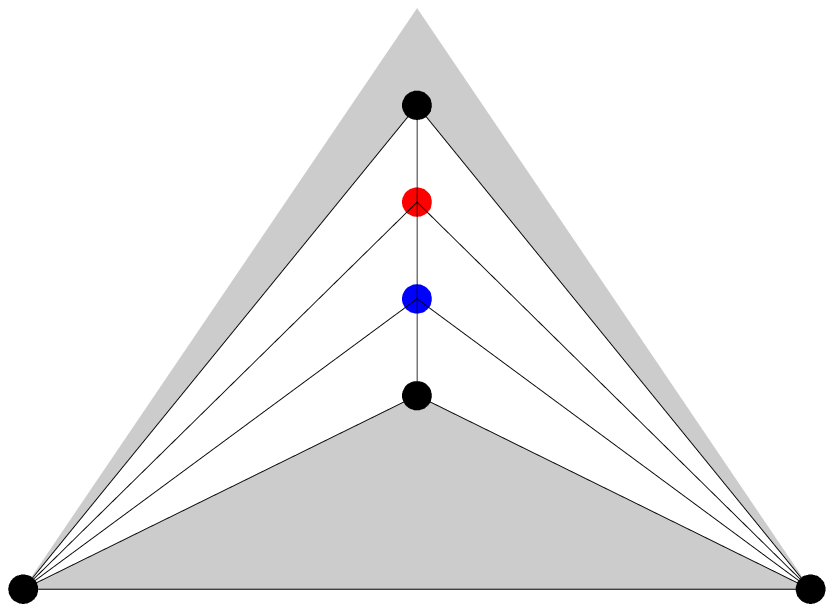,width=\x}\y 
\hphantom{\psfig{file=figs/bluered1.ps,width=\x}}
}
\caption{Exchanging blue and red in the stem of the christmas tree by
a sequence of moves. Top to bottom. The actual flips are taking place
between frames 1-2, 2-3, 5-6, 6-7. The other transitions are again
just moves for better visibility. Note that the shaded regions may
contain arbitrary links, in particular, parts of the stem of the
tree. Therefore, this sequence shows that any two positions in the
vertical stem can be exchanged, except for the bottom 2. Those will be
handled in \fref{bluered2}.} 
\label{bluered1}

\end{figure}
\begin{figure}[htb]
  \centerline{
\psfig{file=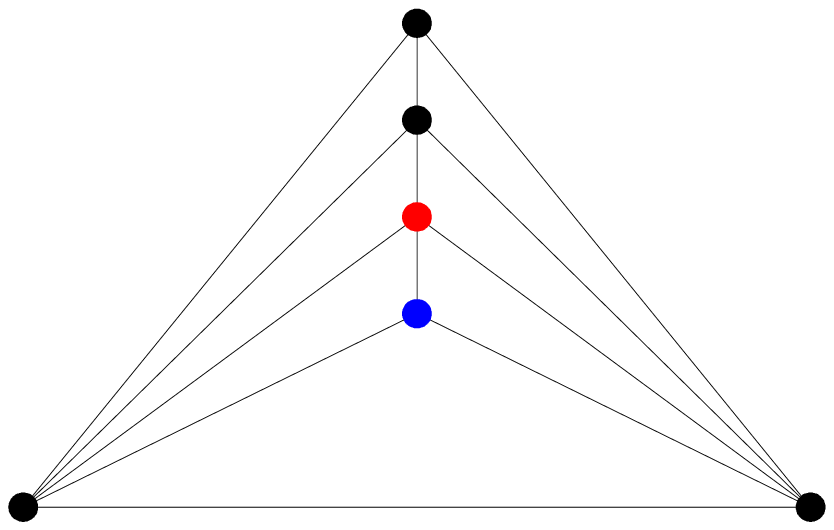,width=\x}\y
\psfig{file=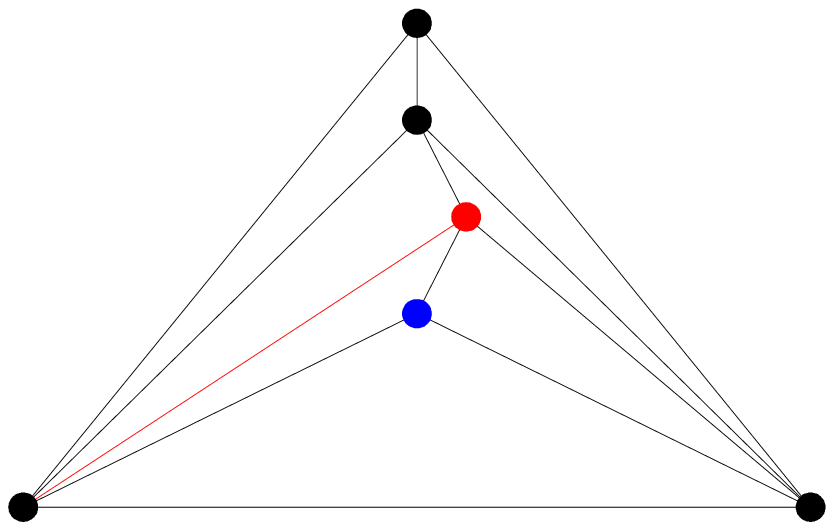,width=\x}\y 
\psfig{file=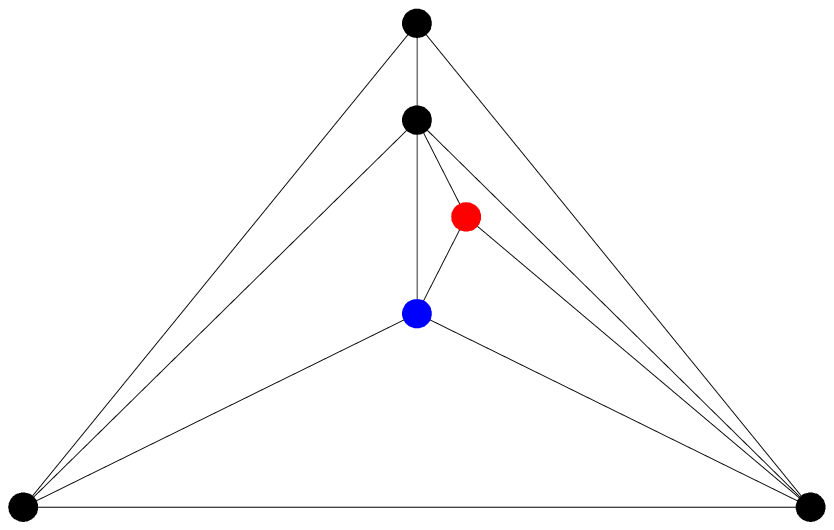,width=\x}}
\centerline{
\psfig{file=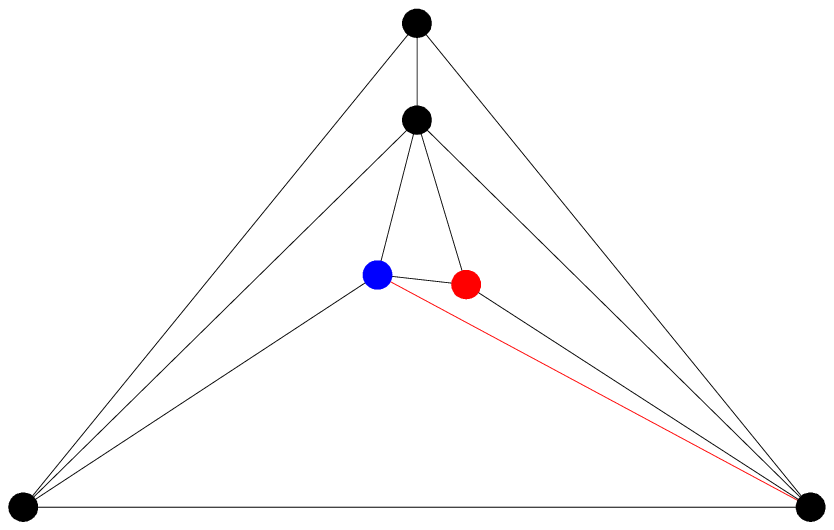,width=\x}\y 
\psfig{file=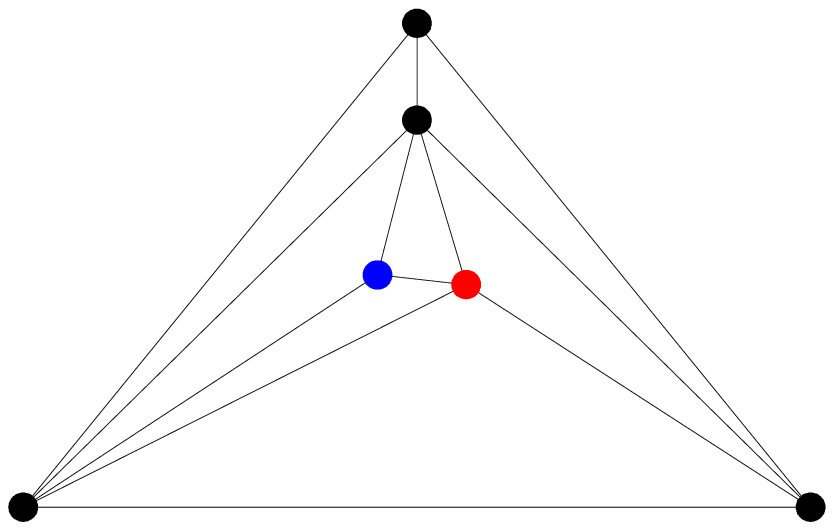,width=\x}\y 
\psfig{file=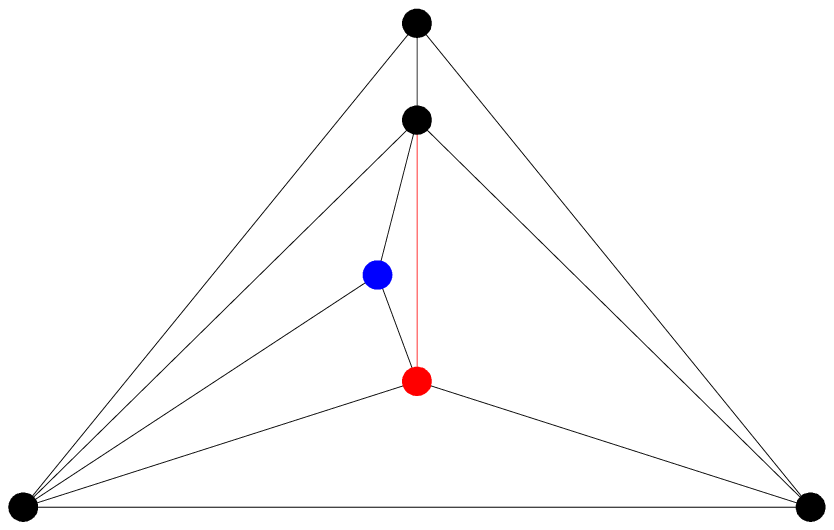,width=\x}}
\centerline{
\psfig{file=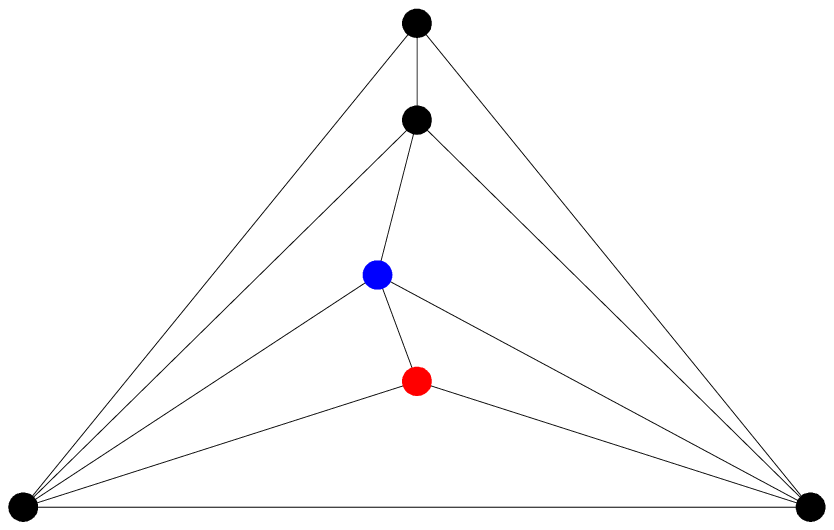,width=\x}\y 
\hphantom{\psfig{file=figs/blueredbottom7.ps,width=\x}}\y 
\hphantom{\psfig{file=figs/blueredbottom1.ps,width=\x}}
}
\caption{Exchanging blue and red in the two bottom positions of the
  christmas stem. Top to bottom. The flips occur between frames 2-3,
  4-5, and 6-7.}
\label{bluered2}

\end{figure}

\begin{remark}\label{r:tree}
Note that the energy of the christmas tree (with $n$ nodes) is at
least (depending on the distribution of colors)
\begin{equ}
E_{\rm tree}(n)\ge  2\cdot(7-(n-1))^2+ (n-3)\cdot(5-4)^2
+1\cdot(5-3)^2=2n^2-31n+129~,
\end{equ}
when $n\ge8$. It is important to note  that the christmas tree is an
``expensive'' configuration energy-wise, but very convenient as a
topological anchor from which to reach other configurations.
\end{remark}

\sect{The phase space}{phasespace}

In this section, we describe the phase space $\TT_n$ of the system. The possible
states of our system of triangulations with $n$ nodes is the set $\TT_{n}$
of all possible
colored triangulations. The set $\TT_n$ has, as we will see, a number of elements
which grows like $C^n$ for some constant $C$. It is thus a discrete
space with a finite number of states. To describe the dynamics of
flipping in a geometric way, one should view this set as the
\emph{dynamical graph} 
$\GG$, in which the nodes are the elements of the set $\TT_{n}$ and two
nodes are linked if one can be reached from the other by a
flip. (This makes an undirected graph, since one can flip back and
forth.) The reader should note that there are two graphs in this
discussion: Each triangulation is a graph with $n$ nodes, and $3n-6$
links (by Euler's theorem), while the graph $\GG$ has about $C^n$
nodes, and about $3n-6$ links \emph{per node}. This last statement
follows because in every state of $\TT_{n}$, one can choose which of the
$3n-6$ links of the triangulation $\T$ one wants to flip. However,
there will, in general, be somewhat fewer links which are candidates
for flipping, because whenever there is a node of degree $3$ in the
triangulation $\T$ its links can not be flipped (a tetrahedron is
unflippable).

In more physical language, comparing with the local degree in $\ZZ^d$,
(which is $2d$), one can say that the ``local dimension'' of the
dynamical graph
$\GG$ is something like $\OO(n)$, while the size of the phase
space (the number of nodes in $\GG$) is $\OO(1)^n$.

Finally, given any two elements in $\TT_{n}$, that is, any two
triangulations with $n$ nodes, we will show below that $\OO(n^2)$ flips
are sufficient to walk on the graph $\GG$ from one to the other. Thus,
 the diameter of the graph $\GG$ is at most $\OO(n^2)$ while it has
$\OO(1)^n$ vertices. This means that $\GG$ has the ``small world'' property
 \cite{Watts1998}. It has also small clustering coefficient, since
 there are very few triangles in the graph $\GG$ (it is difficult to
 get form a triangulation back to the same triangulation with 3 flips).

In the remainder of this section, we prove these statements. They are
well-known for uncolored graphs, so the only task is to prove them for
the colored graphs. 

We first state two known results for the set $\TT_{n,0}$ of uncolored
triangulations: 
\begin{lemma}\label{l:tutte}\cite{Tutte1962,Negami1999,Mori2003}
  The number of elements in $\TT_{n,0}$ is asymptotically
  \begin{equ}\label{tutte}
\left(\frac{256}{27}\right)^{n-3} \frac{3}{16 \sqrt{6\pi n^5}}~.    
  \end{equ}
The distance between any two uncolored triangulations is 
at most $6n-30$ flips.
\end{lemma}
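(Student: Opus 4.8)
The plan is to treat the two assertions of the lemma separately, since they rest on entirely different techniques: Tutte's census for the count, and a canonical-form reduction for the diameter. Both are classical, so the task is really to reconstruct the standard routes.

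For the enumeration I would follow Tutte's census of planar triangulations. The idea is to count \emph{rooted} triangulations — those carrying a distinguished oriented edge on the outer face — because rooting kills all automorphisms and turns the count into a clean generating-function problem; the elements of $\TT_{n,0}$ are exactly the rooted simplicial $3$-polytopes, i.e.\ the $3$-connected rooted triangulations. I would introduce a bivariate generating function $T(x,y)=\sum_{n,m} t_{n,m}\,x^n y^m$ for rooted near-triangulations of a disk, with $x$ marking vertices and the \emph{catalytic} variable $y$ marking the number of boundary edges. Deleting the root edge and analysing the two ways its incident triangle can sit relative to the boundary yields Tutte's functional equation, which is quadratic in $T$ and involves $y$. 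Applying the quadratic method (equivalently the kernel method) to eliminate $y$ shows that the univariate generating function is algebraic and satisfies an explicit polynomial equation. The asymptotics then follow from singularity analysis: the dominant singularity lies at $x_c=27/256$ and is of square-root type, so near $x_c$ one has $T(x)=a-b\,(1-x/x_c)^{3/2}+\dots$, and the transfer theorem converts the $(1-x/x_c)^{3/2}$ term into coefficients of order $x_c^{-n}n^{-5/2}$. This simultaneously produces the base $(256/27)^{n-3}$ and the $\sqrt{n^5}$ in the denominator, the universal planar-map exponent.

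For the diameter bound I would use a canonical-form argument combined with the fact that every flip is its own inverse. Fix a standard triangulation $T_n^\ast$ possessing a \emph{dominant vertex} $v$ of degree $n-1$, adjacent to every other vertex; its link is then an $(n-1)$-cycle $C$, and the disk on the far side of $C$ is triangulated using only the vertices of $C$, i.e.\ as an ordinary polygon triangulation. The target claim is that any $\T\in\TT_{n,0}$ can be brought to $T_n^\ast$ in at most $3n-15$ flips, so that any two triangulations lie at distance at most $2(3n-15)=6n-30$. The reduction runs in two stages. First, raise the degree of a chosen vertex $v$ to $n-1$: if $a,b$ are consecutive on the link of $v$ and the triangle on the far side of the edge $ab$ has apex $w$ not yet adjacent to $v$, then flipping $ab$ creates the edge $vw$ and increases $\deg(v)$ by one. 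Second, once $v$ is dominant, flip the remaining polygon triangulation on $C$ to the canonical fan; these diagonal flips never touch $v$, so dominance is preserved, and the classical polygon-to-fan ear-removal argument bounds this stage by $(n-1)-3=n-4$ flips.

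The main obstacle is the flip accounting in the first stage. The degree of $v$ does not increase monotonically one flip at a time: it can happen that \emph{every} boundary edge of the current link has its far-side apex already adjacent to $v$, so that no single flip exposes a new neighbour and a preparatory ``setup'' flip is forced first. A naive bound of one flip per missing neighbour (which would give only $\sim 4n$ overall) therefore does not apply, and controlling the setup flips is exactly what separates the weaker classical bounds from the $6n-30$ of Mori, Nakamoto and Ota. I would isolate this as a sublemma asserting that $\deg(v)$ can be driven up to $n-1$ within a budget that, together with the polygon stage, stays below $3n-15$, prove it by induction on $n-1-\deg(v)$ with a careful rule for which edge to flip when no apex is exposed, and only then assemble the two stages and invoke reversibility. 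On the enumeration side the analogous ``hard but routine'' point is purely computational: carrying the quadratic method through to the explicit algebraic equation and pinning down the exact prefactor $\tfrac{3}{16\sqrt{6\pi}}$, which requires the precise value of $b$ together with the $\Gamma(-3/2)$ factor supplied by the transfer theorem.
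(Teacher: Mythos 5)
First, a point of reference: the paper does not prove this lemma at all --- it is quoted as known, with the enumeration attributed to Tutte's census \cite{Tutte1962} and the $6n-30$ diameter bound to \cite{Negami1999,Mori2003}. So your proposal must be measured against those cited proofs. Your enumeration half is a faithful outline of the standard route (Tutte's functional equation for rooted near-triangulations with a catalytic boundary variable, the quadratic method, singularity analysis at $x_c=27/256$ with a square-root singularity producing the $n^{-5/2}$ law and the constant $3/(16\sqrt{6\pi})$); that is indeed how the cited asymptotics are obtained, and what you defer is genuinely routine computation.

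The diameter half, however, has a real gap, and it sits exactly where you planted your flag. The strategy ``raise one vertex to degree $n-1$, then fan the remaining polygon triangulation, then double by reversibility'' is the classical Wagner/Komuro canonical-form argument, and with the best known accounting of the forced setup flips that route yields $8n-54$, not $6n-30$. Your whole proof hinges on the unproven sublemma that the degree-raising stage fits within roughly $2n-11$ flips (about two flips per degree increment in the worst case, so that both stages together stay under $3n-15$); you give no rule achieving this, and there is no evidence that the dominant-vertex route can achieve it --- the difficulty of bounding setup flips is precisely why Komuro's bound stalls at $8n-54$. The bound $6n-30$ that the lemma asserts is obtained in \cite{Mori2003} by a genuinely different decomposition: one first shows that any triangulation can be flipped into a $4$-connected one using at most about $n-4$ flips (destroying separating triangles one at a time), invokes Whitney's theorem that $4$-connected planar triangulations are Hamiltonian, and then brings Hamiltonian triangulations to a canonical form by fanning the two outerplanar triangulations lying on either side of the Hamiltonian cycle; it is this split, not degree-raising, that keeps the cost per triangulation at $3n-15$. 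As written, your plan would prove \emph{a} linear diameter bound, but not the stated constant, and the missing sublemma cannot simply be assumed.
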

For the case of the colored graphs, with $n_{\rm red} = n_{\rm blue}
+c$
and $c\in\{0,1\}$, that is, about equal number of red and blue nodes,
one has
\begin{lemma}\label{l:tuttecolor}
  The number of elements in $\TT_n$ is asymptotically bounded above by
  \begin{equ}\label{tuttecolor}
2^n \left(\frac{256}{27}\right)^{n-3} \frac{3}{16 \sqrt{6\pi n^5}}~,
  \end{equ}
and below by the expression \eref{tutte}. 
The distance between any two colored triangulations in $\TT_n$ is bounded by
\begin{equ}[diam]
  C_1 n^2 + C_2
\end{equ}
flips with some universal constants $C_1$, $C_2$.
\end{lemma}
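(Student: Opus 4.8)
The plan is to reduce both assertions to the corresponding facts about the uncolored set $\TT_{n,0}$ recorded in \lref{l:tutte}, by means of the color-forgetting map $\pi\colon \TT_n\to\TT_{n,0}$ that discards the red/blue labels. For the counting bounds I would argue as follows. The map $\pi$ is surjective, since every uncolored triangulation admits at least one admissible coloring (assign $\lceil n/2\rceil$ of its nodes red and the rest blue); hence $|\TT_n|\ge|\TT_{n,0}|$, which is the asserted lower bound \eref{tutte}. For the upper bound, the fiber $\pi^{-1}(\T)$ consists of the admissible colorings of $\T$ taken up to the automorphisms of $\T$, so its cardinality is at most the number of such colorings, namely ${n\choose \lceil n/2\rceil}\le 2^n$. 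Summing over $\T\in\TT_{n,0}$ gives $|\TT_n|\le 2^n\,|\TT_{n,0}|$, and inserting the Tutte asymptotics of \lref{l:tutte} yields \eref{tuttecolor}.

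For the diameter bound I would use the christmas tree of \fref{f:christmas1} as a canonical anchor, exactly as in the proof of \lref{l:irreducibility}, in two stages. First, forgetting colors, bring each of the two given triangulations to the christmas-tree shape; by the uncolored diameter bound of \lref{l:tutte} this costs at most $6n-30$ flips for each one, and these flips merely carry the color labels along for the ride. After this first stage we are left with two christmas trees of identical shape but with, in general, different colorings $\chi_1$ and $\chi_2$ of their nodes. The remaining task is purely combinatorial: transform the color pattern $\chi_1$ into $\chi_2$ on a fixed christmas tree.

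The second stage is where the quadratic term arises, and it is the main obstacle. The color-exchange gadgets of \fref{bluered1} (with the two bottom stem nodes handled by \fref{bluered2}, and the outermost triangle reduced to the stem by re-declaring a new outer triangle as in \lref{l:irreducibility}) realize a transposition of two neighboring positions along the stem at a cost bounded by a constant number of flips, independent of $n$. Since $\chi_1$ and $\chi_2$ carry the same number of red and blue nodes, $\chi_2$ is obtained from $\chi_1$ by a permutation of the node labels, and any such permutation is a product of at most ${n\choose 2}=\OO(n^2)$ adjacent transpositions (its number of inversions). As each adjacent transposition costs $\OO(1)$ flips, this stage costs $\OO(n^2)$ flips, and adding the $\OO(n)$ flips of the first stage gives a total of the form $C_1 n^2 + C_2$ with universal constants, which is \eref{diam}.

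The one point requiring genuine care is the composition of the gadgets. The sequence in \fref{bluered1} leaves the remainder of the stem untouched, hidden inside the shaded regions, precisely so that a transposition at one location does not disturb the colors elsewhere; verifying this locality is what guarantees that the elementary swaps compose into an \emph{arbitrary} permutation of the stem colors. Once this independence is checked, the inversion count controls the total flip cost and produces the stated quadratic bound; the linear first stage and the $\OO(1)$ special cases at the bottom and the outer triangle are then negligible corrections.
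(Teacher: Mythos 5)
Your proposal is correct and follows essentially the same route as the paper: the counting bounds come from the fact that each uncolored triangulation carries at least one and at most $2^n$ colorings, and the diameter bound comes from passing through the christmas tree ($\OO(n)$ flips, colors ignored) and then sorting the colors on the tree by adjacent transpositions, each costing $\OO(1)$ flips (the paper says at most 4), for a total of $\OO(n^2)$.
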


\begin{remark}
  This result might be compared to glass models on cubes. In that
  case, one has also the small world property
  \cite{benarous2006}. However, our model is not ``trap''-like, since
  there are no very deep holes but rather very narrow corridors, see
  also \sref{s:nature}.
\end{remark}
\begin{proof} The lower bound in \eref{tuttecolor} is obvious from Lemma~\ref{l:tutte},
  since there are certainly more triangulations with coloring than
  without. The upper bound follows by observing that there cannot be
  more than $2^n$  different colorings of the nodes of any uncolored
  triangulation. To estimate the number of steps needed to connect two
  colored triangulations, we reduce the problem to the uncolored
  one. Starting from an arbitrary triangulation $\T$ in $\TT_n$, we can
  go to $\T'\in\TT_n$ \emph{without respecting the colors}, by ``passing
  through the christmas tree''. This needs at most $12n-60$
  flips. However, the colors in the final position might be wrong and
  they must be reordered. We do this not at the end, but when we are at
  the christmas tree. Here we use the method described in
  \fref{bluered1} and \ref{bluered2}. Each permutation of two
  neighboring colors can be done by at most 4 flips. Since no color
  has to be moved by more than $n-1$ positions, and there are $n$
  nodes we get the bound \eref{diam} as asserted.
\qed
\end{proof}

\sect{The energies}{s:cost}

Up to now, our discussion has been purely topological. But there is
also energy. The shortest paths (of length at most $C_1n^2 +C_2$
as described in \lref{l:tuttecolor}) to go from $\T$ to $\T'$ are by
no means energetically optimal, and optimal paths are difficult to
find. We have already seen in Remark~\ref{r:tree} that the christmas
tree has energy $\OO(n^2)$. The minimal energy of the model is
clearly 0, by \eref{e:energy}. However, this energy can not be quite
reached, because of Euler's theorem. We have the following,
probably non-optimal result:
\begin{lemma}\label{l:minenergy}
  For every $n=18+12k$ with $k\in\natural$, there is a triangulation
  $\T$ in $\TT_n$ (with an equal number of red and blue nodes) whose
  energy $E(\T)$ is between 6 and 54.
\end{lemma}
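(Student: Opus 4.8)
The plan is to prove the two bounds separately: the lower bound $E(\T)\ge 6$ by a short Euler count valid for \emph{any} balanced colouring, and the upper bound $E(\T)\le 54$ by an explicit ``capped cylinder'' construction, most conveniently described through the dual (pentagon--heptagon) picture.

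For the lower bound, write $x_i=d_i-5$ for each red node and $y_j=d_j-7$ for each blue node. Since there are $n/2$ nodes of each colour and $\sum_{\rm all} d = 6n-12$ by Euler's theorem, summing gives
\begin{equ}
\sum_i x_i + \sum_j y_j = (6n-12) - 5\,(n/2) - 7\,(n/2) = -12~.
\end{equ}
Because the $x_i,y_j$ are integers we have $x^2\ge|x|$, so $E(\T)=\sum_i x_i^2+\sum_j y_j^2 \ge \sum_i|x_i|+\sum_j|y_j|\ge |\sum_i x_i+\sum_j y_j| = 12 > 6$. Thus every colored triangulation with equal colours already has energy at least $6$ (in fact at least $12$), and this identifies the real obstruction as the deficit $-12$ that Euler forces onto the degrees.

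For the upper bound I would build $\T$ via its dual, a $3$-regular tiling of the sphere in which a node of degree $d$ becomes a $d$-gon, so that the energy is $\sum_{\rm red}(|{\rm face}|-5)^2 + \sum_{\rm blue}(|{\rm face}|-7)^2$ and Euler reads $\sum_k(6-k)p_k = 12$. The construction is a nanotube: a cylindrical \emph{body} tiled by a periodic band of pentagons and heptagons in equal numbers, which (having zero net curvature) wraps a cylinder; colouring each body pentagon red and each body heptagon blue makes the body contribute exactly $0$ to the energy while keeping colour counts equal. The cylinder is closed by two \emph{caps}, each carrying $6$ pentagons, together supplying the curvature $+6+6=12$ that Euler demands. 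To keep the colours balanced I colour $6$ of the $12$ cap pentagons blue (each costing $(5-7)^2=4$) and the other $6$ red (cost $0$), so that red $=$ blue $=n/2$. Choosing the body to contain $3+6k$ pentagons and $3+6k$ heptagons gives $n = 2(3+6k)+12 = 18+12k$, and raising $k$ by $1$ merely inserts one more balanced $6$-pentagon/$6$-heptagon ring into the body at no energy cost. All defects --- the twelve cap pentagons, their colouring, and any fixed mismatch where a cap meets the body --- live in a region of bounded size independent of $k$, so $E(\T)$ is a $k$-independent constant, and a crude accounting of these bounded contributions yields $E(\T)\le 54$.

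Two points need care, and the second is the real obstacle. First, one must exhibit the base tiling ($k=0$, $n=18$) explicitly and check that inserting a balanced ring leaves every pre-existing face-size and both caps untouched, so the body genuinely costs $0$ and the count and colour balance are exactly as claimed. Second --- and this is where the work lies --- one must verify that the dual object is a \emph{bona fide} triangulation, \ie\ a simplicial $3$-polytope: the tiling must be simple and $3$-connected, which for the triangulation means the absence of double edges, a condition most easily violated near the caps and at the cap--body junction. I would handle this by fixing once and for all an explicit double-edge-free cap together with a matching band period, and then arguing that ring insertion preserves $3$-connectivity and the absence of multiple edges. Once the construction is pinned down, the value $n=18+12k$ and the energy bound follow from the bounded-defect bookkeeping above.
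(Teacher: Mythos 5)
Your lower-bound argument is correct, and in fact sharper and more general than anything in the paper: since $\sum_i d_i = 6n-12$ while the target degrees of a balanced coloring sum to $5\cdot\frac n2+7\cdot\frac n2=6n$, the integer deviations sum to $-12$, and $x^2\ge |x|$ gives $E(\T)\ge 12$ for \emph{every} balanced colored triangulation, not just the constructed one. (This is consistent with, and explains, the minimal energies $11$, $16$, $27$ for $n=367,1096,3283$ in Table~\ref{tab:1}, where the color counts differ by one and the same computation gives $\ge 11$.) The paper only observes that its particular example has six degree-4 nodes, each costing at least one unit; your Euler-deficit argument is a genuine improvement on that half of the statement.

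The gap is in the upper bound, which is the actual content of the lemma, and you have flagged it yourself without closing it: the lemma is an existence claim, and your proposal never exhibits the object whose existence is asserted. The paper's proof is an explicit picture (\fref{f:5742}): an 18-node triangulation with six blue nodes of degree 7, six red of degree 5, and six black of degree 4, whose inner triangle reproduces the boundary data of the outer one, so that 12 nodes at a time can be inserted recursively without changing the set of defects. Your nanotube plan instead rests on three unverified claims: (i) that a pentagon/heptagon band with exactly $3+6k$ faces of each kind tiles a cylinder (plausible, since the dual of the paper's torus state in \fref{f:torus} gives such a band); (ii) that this cylinder can be closed by two all-pentagon caps so that the resulting cubic map is simple and 3-connected, equivalently so that the dual triangulation has no double edges; and (iii) that whatever ``fixed mismatch'' occurs at the cap--body junction is cheap enough to keep the total below 54. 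Point (iii) is where the specific constant 54 lives: a half-dodecahedron cap has a boundary 10-cycle with no evident matching to your rings of 12 body faces, so extra faces of size $\ne 5,7$ may be forced at the junction, their colors are constrained by the balance requirement, and ``crude accounting of bounded contributions'' then yields only $E(\T)\le C$ for some unspecified constant --- i.e.\ Corollary~\ref{c:minenergy}, not Lemma~\ref{l:minenergy}. Until an explicit cap, band period, and junction are written down, checked to be double-edge free, and their energy summed, the upper bound is a plan rather than a proof.
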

\begin{corollary}\label{c:minenergy}
  There is a constant $C$ such that for every $n$ there is a
  triangulation $\T\in\TT_n$ (with the number of red and  blue nodes
  differing by at most 1) such that $E(\T)\le C$.
\end{corollary}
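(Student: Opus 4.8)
The plan is to deduce the corollary from \lref{l:minenergy} by \emph{interpolating} between the admissible sizes $n=18+12k$. The values covered directly by the lemma are spaced $12$ apart, so every $n$ lies within $11$ of the largest admissible $n_0\le n$; the idea is to start from the lemma's balanced triangulation on $n_0$ nodes and to manufacture the missing $r=n-n_0\le 11$ nodes by purely local operations whose total energy cost is bounded independently of $n$. First I would dispose of the finitely many small cases $n<18$: for each such $n$ there are only finitely many elements of $\TT_n$, hence a finite minimal energy over the balanced colorings, and I absorb the maximum of these finitely many numbers into the final constant $C$.

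For $n\ge 18$, I would set $n_0=18+12\lfloor(n-18)/12\rfloor$, so that $n=n_0+r$ with $0\le r\le 11$, and let $\T_0\in\TT_{n_0}$ be the triangulation furnished by \lref{l:minenergy}, with equal numbers of red and blue nodes and $E(\T_0)\le 54$. I would then enlarge $\T_0$ to an $n$-node triangulation by performing $r$ successive \emph{stellar subdivisions}: each step inserts a new node inside a chosen triangular face and joins it to the three corners. This adds exactly one node, of degree $3$, keeps the object a simplicial triangulation of the sphere without double edges, and is consistent with Euler's relation $\sum_i d_i=6n-12$. Since the colors of the $r$ new nodes are free in $\TT_n$, I choose them so that the numbers of red and blue nodes stay within $1$ of each other (this is automatic: $n_0$ is even and balanced, and when $r$ is odd even an exactly balanced coloring is impossible, so differing by at most $1$ is both necessary and achievable).

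To bound the resulting energy I would exploit that $E(\T_0)\le 54<64$ forces $|d_i-t_i|\le 7$ for every original node (where $t_i\in\{5,7\}$ is its target), so every original degree is at most $14$; after at most $11$ subdivisions no degree exceeds $14+11=25$. Only the $r$ new nodes together with their at most $3r$ neighbors have their contribution to $E$ altered, so at most $\OO(1)$ nodes are affected, each now contributing at most $(25-5)^2=\OO(1)$. Hence the total energy increase is bounded by an absolute constant, and $E(\T)\le C$ holds uniformly in $n$, which is the assertion of the corollary.

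The step I expect to be the genuine (if modest) obstacle is controlling this energy increment while respecting the color balance. A freshly inserted node has degree $3$, which is cheap for a red node, $(3-5)^2=4$, but expensive for a blue one, $(3-7)^2=16$; one therefore cannot simply color all new nodes to minimize their individual cost without destroying the balance. The resolution is that only boundedly many nodes ($r\le 11$) are ever inserted, so even the worst coloring and the worst pile-up of degree on a shared neighbor costs only a bounded amount. One may of course spread the subdivisions over distinct faces to keep the constant small, and a few flips after each insertion would raise the degree-$3$ nodes toward their targets and sharpen $C$ considerably, but for the qualitative statement $E(\T)\le C$ the crude estimate above already suffices.
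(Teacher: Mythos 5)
Your proof is correct and takes essentially the same route as the paper: the paper likewise starts from the triangulation of \lref{l:minenergy} at the nearest admissible size $n'=18+12k\le n$ and inserts the remaining $\ell\le 11$ nodes locally (inside the innermost triangle), observing that this increases the energy only by a finite, $k$-independent amount. Your version merely spells out details the paper leaves implicit --- the stellar subdivisions, the degree bound $d_i\le 14$ coming from $E(\T_0)\le 54$, the color-balance bookkeeping, and the finitely many cases $n<18$ --- which is added care rather than a different argument.
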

\begin{figure}[!htb]
   \centerline{\psfig{file=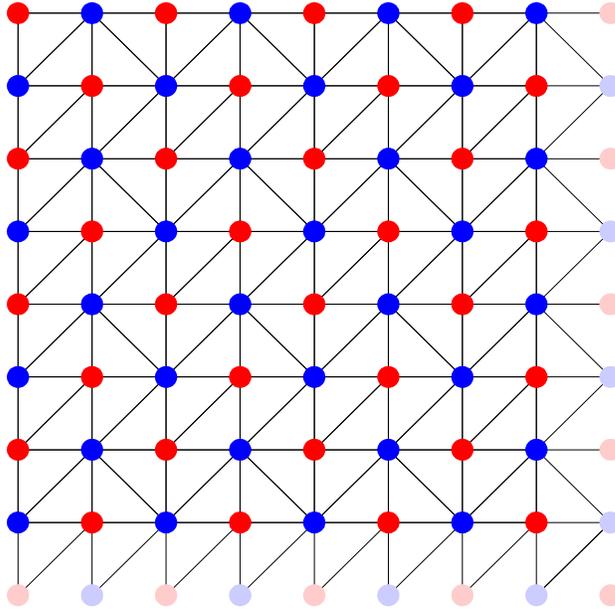,width=8.2cm}}
\caption{A triangulation of the torus with 64 nodes and energy 0. We
  show the 64 nodes and the periodic extension (softer colors).}
 \label{f:torus}
\end{figure}
\begin{remark}
  Note that the statements above are by and large independent of the
  choice of the function $E$, provided it has its local minima at
  $5$ and $7$. Of course, the constants will depend on the details of
  $E$ but the basic facts will not. What will change, however, is, \eg,
  the highest possible energy. In our case, it is $\OO(n^2)$ but for
  potentials which grow faster it will be higher.
\end{remark}

\begin{remark}\label{r:torus}
  One can also study triangulations of the torus (where the Euler
  characteristics is 0). In that case, it is
  easy to see, and in fact shown in Fig.~3 of \cite{Proglass2007},
  that there is a state of energy 0 when $n$ is a multiple of 4. This
  state is a regular arrangement of nodes of degree 5 and 7 (see
  \fref{f:torus}). Note that flipping all the links connecting blue
  nodes in the 2nd column will exchange blue and red there and will
  generate another configuration of energy 0. Since this can be done
  for all even columns independently, the degeneracy of the ground state of torus
  triangulations with $n=4k$ nodes is at least about $2^{\sqrt{k}}$,
  which makes 
  it quite degenerate. One can play the same game with horizontal
  rows, but this does not change the square root behavior of the exponent.
\end{remark}

\begin{proof}[of Lemma \ref{l:minenergy}]
  The proof is by construction. In \fref{f:5742} we show a
  triangulation which has 18 nodes, and energy $E$ between 6 (which seems to
  be the minimal possible energy) and 54. Note that the shaded triangle has
  the same number of internal links from its corners than the
  outermost triangle (namely 6). Therefore, we can repeat the
  construction recursively in the interior triangle by adding another 12 nodes (6
  blue and 6 red), \ie, the shaded triangle will look like the
  original one. Its 3 black nodes will become red. Therefore, the number of black dots will not
  increase, and we see that for $k\in\natural$ there is a
  triangulation with $18+12k$ nodes, with energy between 6 and 54, as
  asserted.
The corollary follows immediately: If $n=18+12k+\ell$ with
  $0\le \ell\le 11$, we just do the construction for $n'=18+12k$ nodes
  and add the additional $\ell$ nodes inside the innermost (black) triangle,
  and connecting them to make a triangulation. This subgraph and its
  connections to the black nodes will increase the energy by some
  finite, $k$-independent amount.\qed
\end{proof}
\begin{remark}
  The bounds of \lref{l:minenergy} are not optimal. For better values,
  see also the numerical  studies of \sref{s:numerics}.
\end{remark}
\begin{figure}[!htb]
  \centerline{\psfig{file=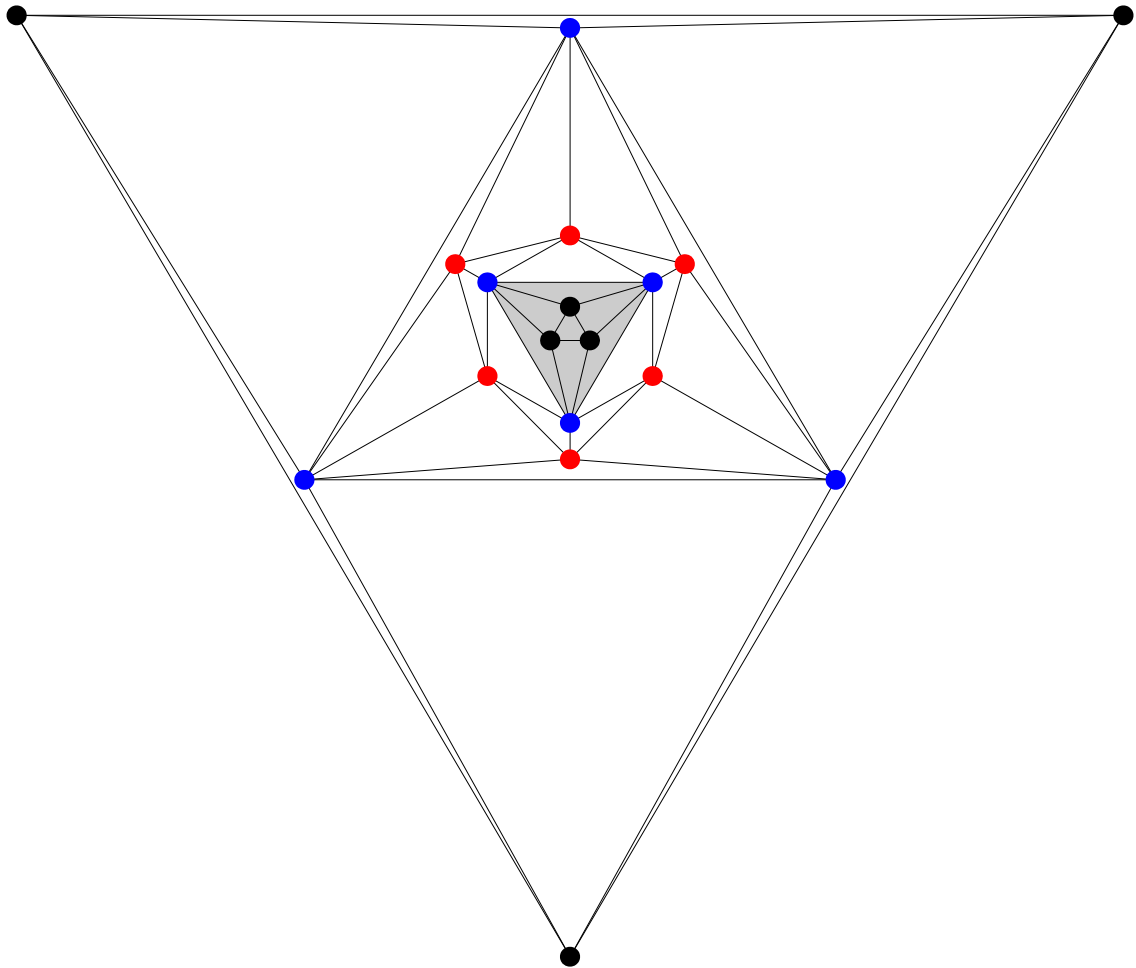,width=\textwidth}}
\caption{A triangulation $\T$ with 18 nodes and energy $E(\T)\le 54$. The
  blue nodes have degree 7 and the red ones have degree 5. The black
  ones have degree 4, which means that the energy of the triangulation
  is at least $6\cdot(5-4)^2=6$ and at most $6\cdot(7-4)^2=54$.}
\label{f:5742}
\end{figure}

We next give a bound on the degeneracy of the energy levels. Given $n$,
one can ask about the number $N(n,E)$ of triangulations of $\TT_n$ of energy $\le
E$,
and of course, one can ask about their distribution in the limit
$n\to\infty$. Here, we only have a lower bound on $N$, which is certainly not optimal. But this bound will show that
for intermediate energies $N(n,E)$ grows at least like $C(E)^n$
as $n\to\infty$. 
\begin{lemma}\label{l:states}
There are constants $C_*>1$ and  $E_*>0$ such that the following holds.
For every sufficiently large $n$ and every $100< E< E_* n$ one has the
lower bound
\begin{equ}\label{e:multiplicity}
  N(n,E)\ge C_*^E~.
\end{equ}
 \end{lemma}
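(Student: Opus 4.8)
The plan is to produce, for each admissible $E$, at least $C_*^E$ distinct colored triangulations of energy $\le E$ by planting many independent, cheap, local defects into a single fixed low-energy reference configuration. For the reference I would take a triangulation $\T_0\in\TT_n$ of energy $E_0\le 54$ supplied by Lemma~\ref{l:minenergy} (and Corollary~\ref{c:minenergy} for the values of $n$ not of the form $18+12k$), whose bulk consists of nodes sitting at their ideal degrees $5$ and $7$, with only $\OO(1)$ ``defective'' nodes.

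The elementary defect is a single flip performed in a \emph{clean} region. If an edge $BC$ (with adjacent triangles $ABC$ and $BCD$) is flippable and all four nodes $A,B,C,D$ sit at their ideal degree, then the flip sends $d_A,d_D\mapsto d_A+1,d_D+1$ and $d_B,d_C\mapsto d_B-1,d_C-1$; since each of the four was at its ideal value (whether $5$ or $7$), each contributes exactly $+1$ to \eref{e:energy}, so the flip raises the energy by \emph{exactly} $4$, independently of the colors, and produces a combinatorially different triangulation. I would then extract a family $\mathcal F$ of such clean flippable edges whose flip-neighborhoods are pairwise disjoint. All but $\OO(1)$ of the $3n-6$ edges are clean because all but $\OO(1)$ nodes have ideal degree, and flippability holds since the near-regular degrees are far above $3$. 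As $\T_0$ has bounded degree, each candidate edge conflicts with only $\OO(1)$ others, so a greedy selection yields $|\mathcal F|\ge \alpha n$ for some constant $\alpha>0$. Disjointness makes the flips commute and keeps every chosen edge flippable and clean in the presence of the others, so for any $S\subseteq\mathcal F$ the simultaneous flip of the edges in $S$ gives a valid triangulation $\T_S$ with $E(\T_S)=E_0+4|S|$.

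For the count, given $100<E<E_* n$ I set $m=\lfloor (E-E_0)/4\rfloor$, so that $E(\T_S)\le E$ whenever $|S|=m$ (note $E>100>E_0$ makes $m\ge E/4-\OO(1)$, and $E<E_*n$ with $E_*$ small keeps $m\le|\mathcal F|$). Distinct subsets $S$ give distinct elements of $\TT_n$: the objects are rooted (recall $\TT_{n,0}$ is the set of rooted simplicial $3$-polytopes), and a rooted $3$-connected planar triangulation has no nontrivial automorphism, so no two of the $\T_S$ can coincide once the root is fixed away from $\mathcal F$. Hence
$$N(n,E)\ \ge\ \binom{|\mathcal F|}{m}\ \ge\ \binom{\alpha n}{m}\ \ge\ \Bigl(\frac{\alpha n}{m}\Bigr)^{m}.$$
Since $m\le E_* n/4$ we have $\alpha n/m\ge 4\alpha/E_*$, which can be forced to exceed any prescribed constant by choosing $E_*$ small; combined with $m\ge E/4-\OO(1)$ and $E>100$, this yields $N(n,E)\ge C_*^E$ for a suitable $C_*>1$.

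The part needing the most care is the \emph{joint} independence of the defects: that pairwise-disjoint flip-neighborhoods genuinely make the flips commute, preserve flippability and cleanliness, and add the energies, and that the reference configuration really offers $\Theta(n)$ such clean sites. The potentially delicate issue of distinctness is made essentially free by working with rooted objects, whose rigidity rules out accidental isomorphisms among the $\T_S$.
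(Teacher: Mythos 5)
Your construction is genuinely different from the paper's. The paper also argues constructively, but instead of flipping edges of a fixed low-energy reference triangulation it \emph{inserts}, into the triangles of a base graph $G_m$ of the type of Fig.~\ref{f:5742}, a 20-node gadget $H$ whose interior can be rotated in 3 ways without changing the energy; with $p$ insertions this gives $3^p$ graphs of equal energy, and the bookkeeping $n=m+20p$, $E\le 54+E_2p$ yields \eref{e:multiplicity}. Your scheme --- fix $\T_0$ from \lref{l:minenergy}, pick $m\approx E/4$ edges out of a family $\mathcal F$ of $\alpha n$ pairwise disjoint clean flippable edges, and count subsets --- has some advantages: it keeps $n$ fixed throughout (the paper must split $n$ as $m+20p$, with attendant divisibility fiddling), it gives the larger count ${\alpha n \choose m}$ instead of $3^p$, and it makes transparent that only the behavior of the energy near its minima enters (the exact $+4$ per clean flip). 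Both proofs share the same skeleton: many independent, bounded-cost, local modifications of one cheap configuration, plus an injectivity claim.

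Two of your steps need repair. First, flippability of a clean edge does \emph{not} follow from ``degrees far above 3'': the flip of $BC$ is forbidden whenever the opposite diagonal $AD$ is already present, and this can happen with all four degrees equal to 5 or 7 --- then $ABCD$ spans a $K_4$, the triangles $ABD$ and $ACD$ are separating, and $B$, $C$ pick up their remaining neighbors inside the two pockets; no tetrahedron is needed. So ``clean $\Rightarrow$ flippable'' is false in general, and you must either check on the explicit reference triangulation of \lref{l:minenergy} (the nested-triangle construction, where linearly many clean flippable edges with disjoint quads are visible by inspection) or otherwise bound the number of clean non-flippable edges; your general argument as stated does not do this. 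Second, distinctness of the $\T_S$ does not follow from ``a rooted triangulation has no nontrivial automorphism'': the danger is not automorphisms of a single $\T_S$ but root-preserving \emph{isomorphisms between two different} $\T_S$, $\T_{S'}$. The argument that actually works: the flip quads are pairwise disjoint closed disks, so the faces of $\T_0$ outside all of them form a connected common subcomplex of $\T_S$ and $\T_{S'}$ containing the root; rigidity of rooted maps forces any isomorphism to be the identity on this common part, and then at a quad where $S$ and $S'$ differ the image of the face $ABD$ of $\T_S$ would have to be a face of $\T_{S'}$ on the vertices $A,B,D$, which does not exist since $AD$ is not an edge of $\T_{S'}$. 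To be fair, the paper is no more careful on this point (it simply asserts that the rotated insertions are distinct because the christmas tree breaks the 3-fold symmetry), so your proposal is at the paper's own level of rigor there; but if you want a complete proof, that propagation argument is the piece to write out, and the flippability point is the one place where your reasoning, as given, is actually wrong rather than merely terse.
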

\begin{remark}
  The inequality \eref{e:multiplicity} can also be interpreted by
  saying that the number of states with energy $\epsilon n$ and
  $\epsilon <E_*$ is at least
  \begin{equ}
    N(n,\epsilon n) \ge C_*^{\epsilon n}~.
  \end{equ}
This should be compared to the growth rate of the number of
triangulations in $\TT_n$ which is also of the form $\const^n$.
\end{remark}
\begin{proof}
We give again a constructive proof, with no attempt to optimality.
We begin with a graph $G_m$ of the type of \fref{f:5742} with $m=18+12k$ nodes.
In every triangle of $G_m$  we may insert a graph $H$ which we now describe.
The graph $H$ is of the following form: First extend the triangle
``inward'' as shown in the leftmost panel of \fref{f:rotate} (the
triangle with the red point). Then the inner gray triangle is filled with
the triangulation of \fref{f:5742}, \ie, with $G_1$. Finally in the innermost triangle
of \fref{f:5742} we draw a christmas tree by adding 2 vertices. This
tree will break the rotation symmetry of \fref{f:5742}.
The whole construction adds 20 nodes to the original graph $G_m$, for every
insertion of $H$
(3 nodes in the first step, another 15 to place
\fref{f:5742} and 2 more for the tree).
Since there are $2m-4$ triangles in the original graph
there can be at most $2m-5$ insertions of $H$, namely one per original
triangle (we do not insert into
the outside of the basic triangle).

Note now that rotating the interior graphs in $H$ as shown in the 3
panels of \fref{f:rotate} will create a different graph for each
rotation of each of the inserted $H$, except perhaps for one overall
rotation. \emph{All these graphs have the same energy}. Thus, if there
are $p$ insertions of $H$ there will be $3^p$ graphs with the same
energy and the same $n$.

We next give an upper bound on the change of energy when inserting $p$ of
these graphs $H$. Inserting a triangle as in the left panel of
\fref{f:rotate} will increase the degree at each of the three nodes by
1. Since the original degree was at most 7, the new degree is at most
14 and hence, for every inserted $H$ this contribution can raise the
energy by at most some $E_0  (= 3\cdot (14-5)^2)$. (In fact it will be
less than that if not all triangles are filled with $H$'s.)
The energy of each graph $H$ itself is bounded by some other constant
$E_1$ ($=54 $ plus the contribution from the inserted tree). Thus,
upon inserting $p$ graphs $H$ the energy will grow at most by $p E_2=
p\cdot (E_0+E_1)$. 

Summarizing, we see that when we start with $m$ nodes and insert $p$
graphs $H$ we get $3^p$ graphs of the same energy by rotating the $p$
insertions separately. Furthermore, the order of the graph is $m+20p$
and the energy is less than $54+p E_2$, where the 54 is the bound of
Lemma~\ref{l:minenergy}.

One can now rearrange this statement to obtain the claim of
Lemma~\ref{l:states}. 
Since we started with $m$ nodes and were able to insert at most $2m-5$
graphs $H$, each of which adds 20 nodes, we get the inequalities
  \begin{equ}
    p\le 2m -5~,\quad n=m+20 p, \quad E \le 54 + E_2 p,\quad N=3^p~.
  \end{equ}
From this we conclude that the inequalities are satisfied if
\begin{equ}
  (p+5)/2 \le m \le n -20 p~,
\end{equ}
that is if
$  p\le \frac{2n-5}{41}
$
which for large $n$ is satisfied for $p\le n/25$. This means that the
number of insertions for which our construction works is bounded by
$n/25$ and the corresponding energy of such graphs is bounded by 
$54+ E_2 p \le 54+ E_2 n/25 \le E_* n$, with $E_*$ for example equal
to $E_2/50$ when $n$ is large enough. And in all these cases we
have $3^p$ graphs with the same energy, that is, at least
$3^{(E-54)/E_2}$. Introducing the lower bound $E>100$, we get
\eref{e:multiplicity} and the proof is complete.\qed
\end{proof}
\begin{remark}
  One can obtain somewhat less good bounds by inserting directly
  christmas trees into each triangle of $G_m$.
\end{remark}
\begin{figure}[!htb]
\def\x{4.5cm}
\def\y{\kern2em}
\centerline{
\psfig{file=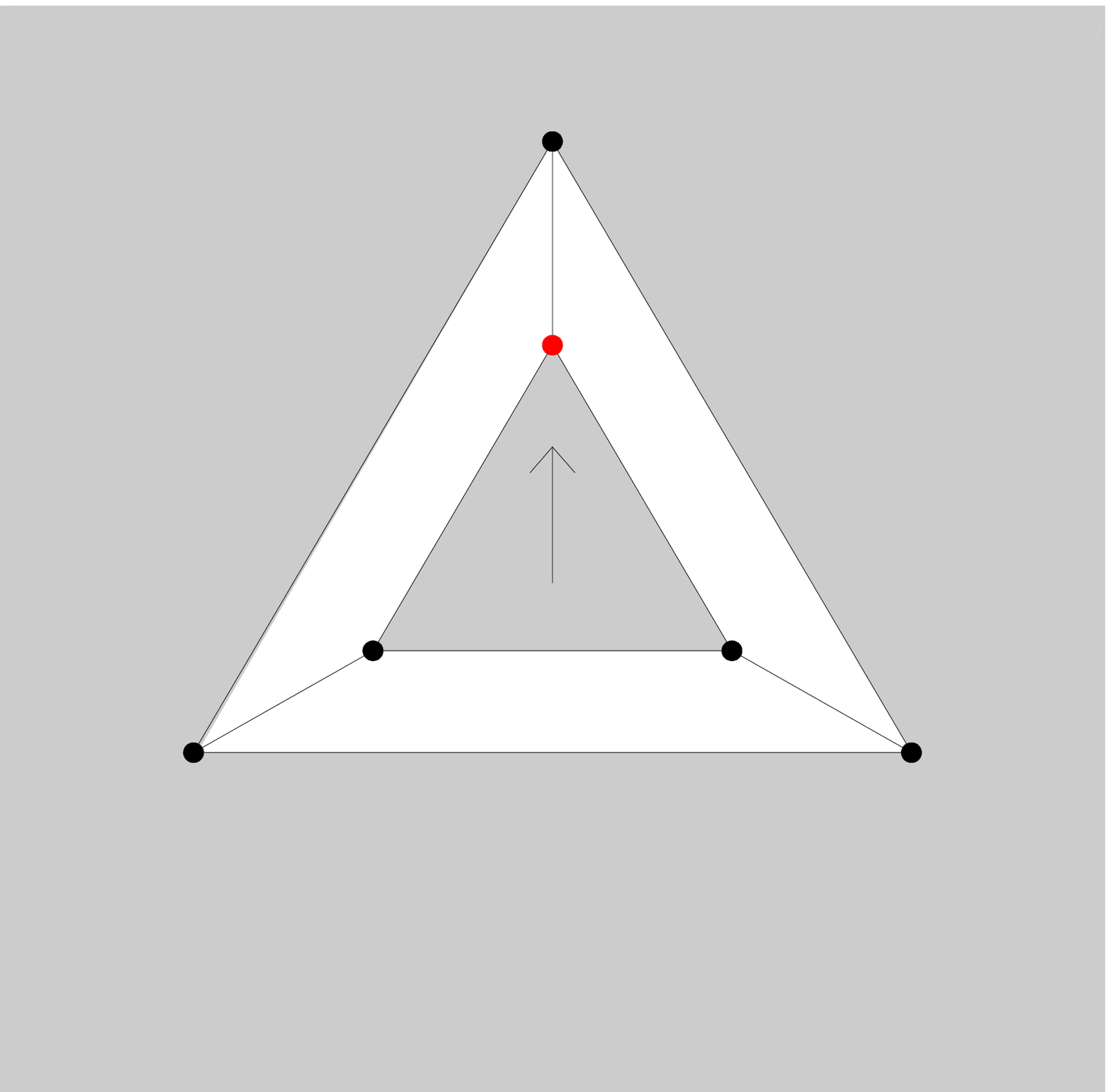,width=\x}
\y
\psfig{file=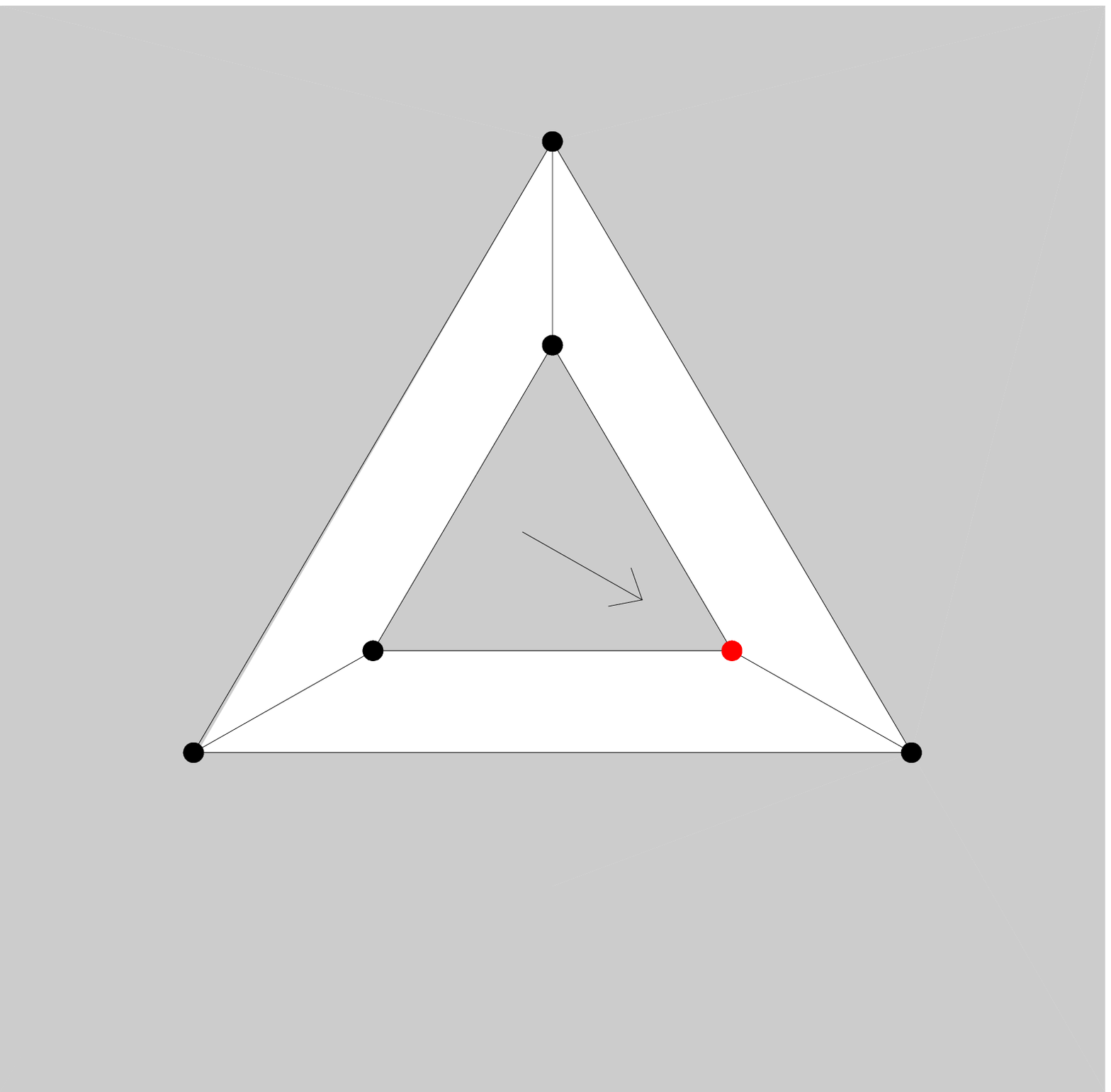,width=\x}
\y
\psfig{file=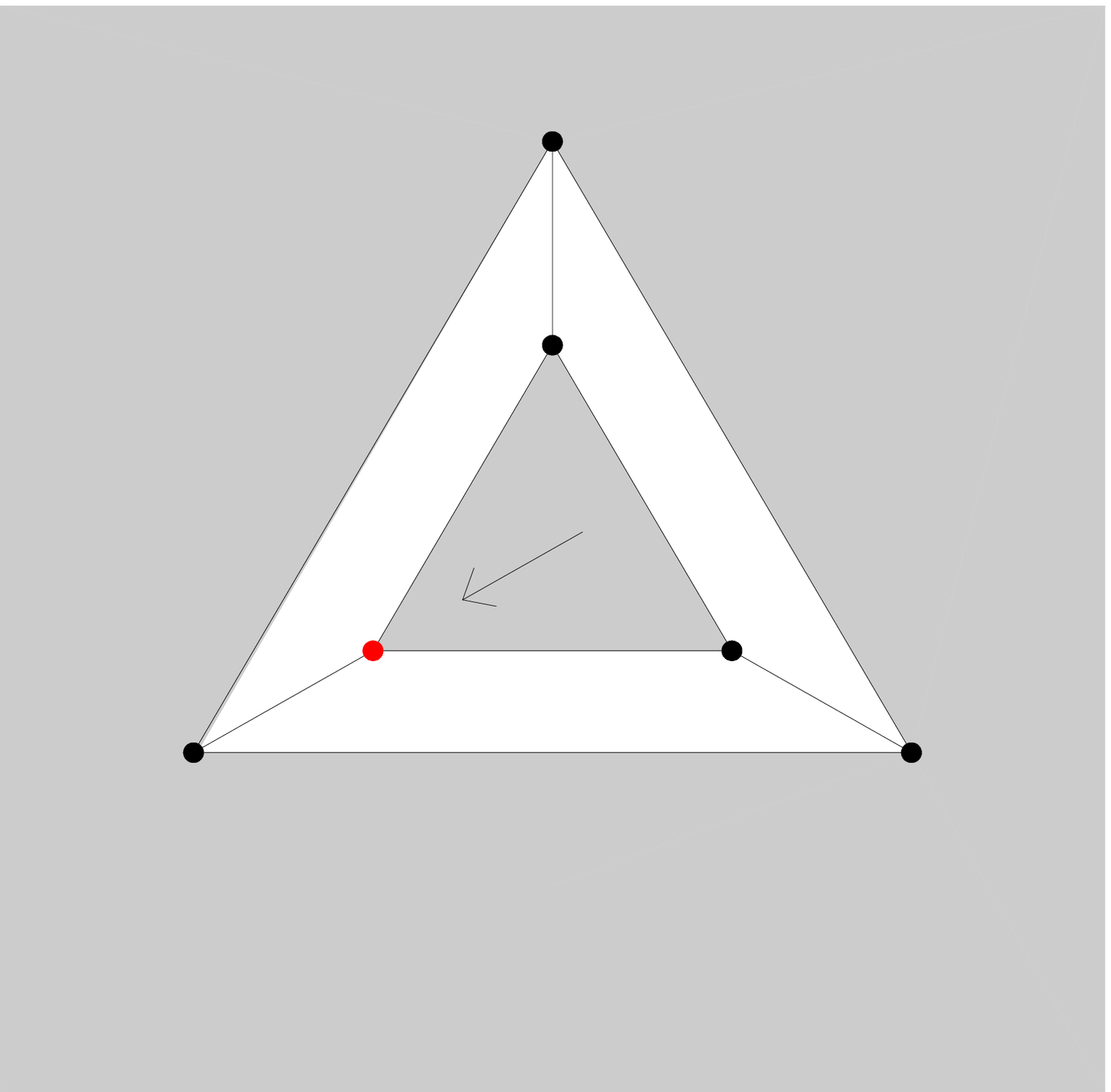,width=\x}
}
\caption{Three configurations, in the interior of an arbitrary
  triangle, which have the same energy (the inserted figure being one
  with three external nodes of same degree and same color). These three configurations are
  different unless the interior triangle has 3-fold symmetry.}
\label{f:rotate}  
\end{figure}

\sect{The dynamics of defects}{s:defects}

A useful way to view a typical state of low energy is the notion of
defects. Let us call \emph{defect} any red node whose degree is not 5 
and any blue node whose degree is not 7. Thus, the graph of \fref{f:5742} has 6
defects. When the energy of a triangulation of $n$ nodes is less than
$\epsilon n$---which is quite frequent when $n$ is large, as we saw in
Lemma~\ref{l:states}---then there are very few defects, since each
costs at least one unit of energy. Therefore, at these very low energies
it is useful to view the triangulation as a dilute gas of defects. 

Here we want to show that these defects can actually move. Again, this
is illustrated by an example. In \fref{f:movedefect}, we show a
sequence of 3 flips with the property that after the flips, one node
has degree lowered by 1 and a node at distance 2 has degree raised by
one, while all other nodes have the same degrees before and after the
flips. It is easy to see that this mechanism allows one to ``move'' a
defect by 2 steps with a small number of flips. Of course, by
Lemma~\ref{l:irreducibility} and Lemma~\ref{l:tuttecolor},
we already know that this can be done in $\OO(n^2)$ steps. But what is
new here is that the number of steps needed to move the defect by a
distance 2 is independent of the size of the triangulation.
\begin{figure}[!htb]
\def\x{4.5cm}
\def\y{\kern2em}
  \centerline{
\psfig{file=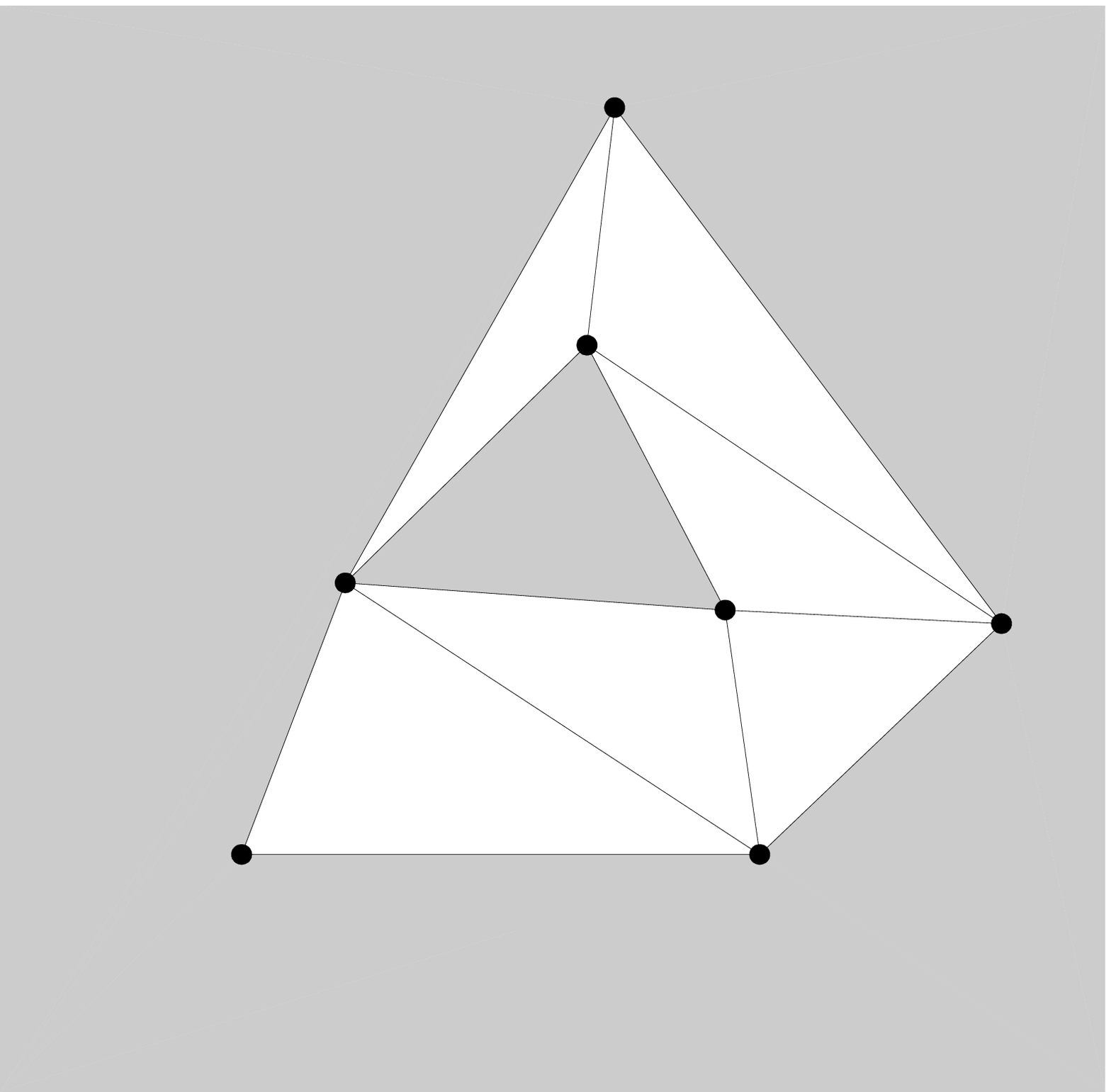,width=\x}\y
\psfig{file=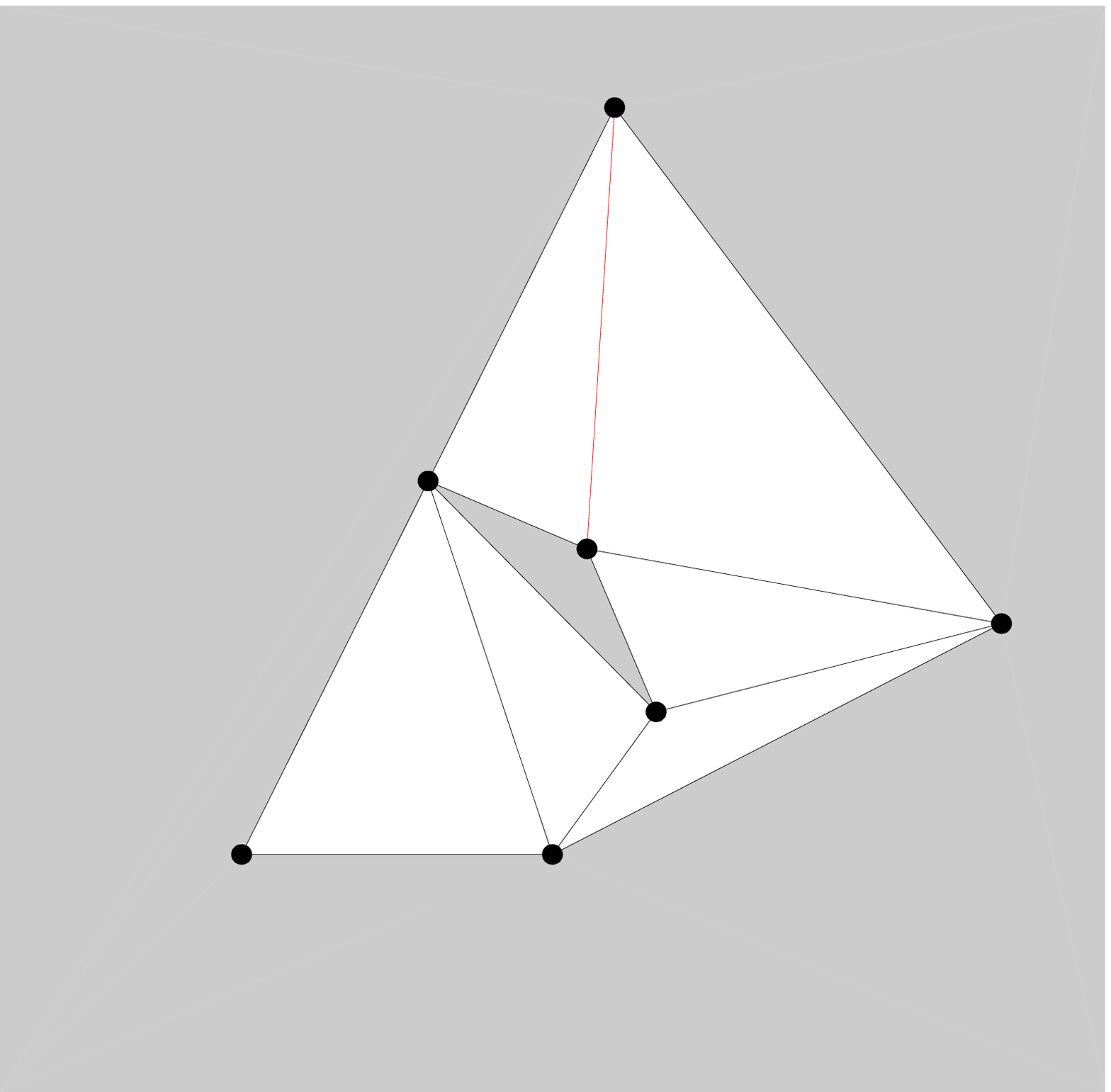,width=\x}\y
\psfig{file=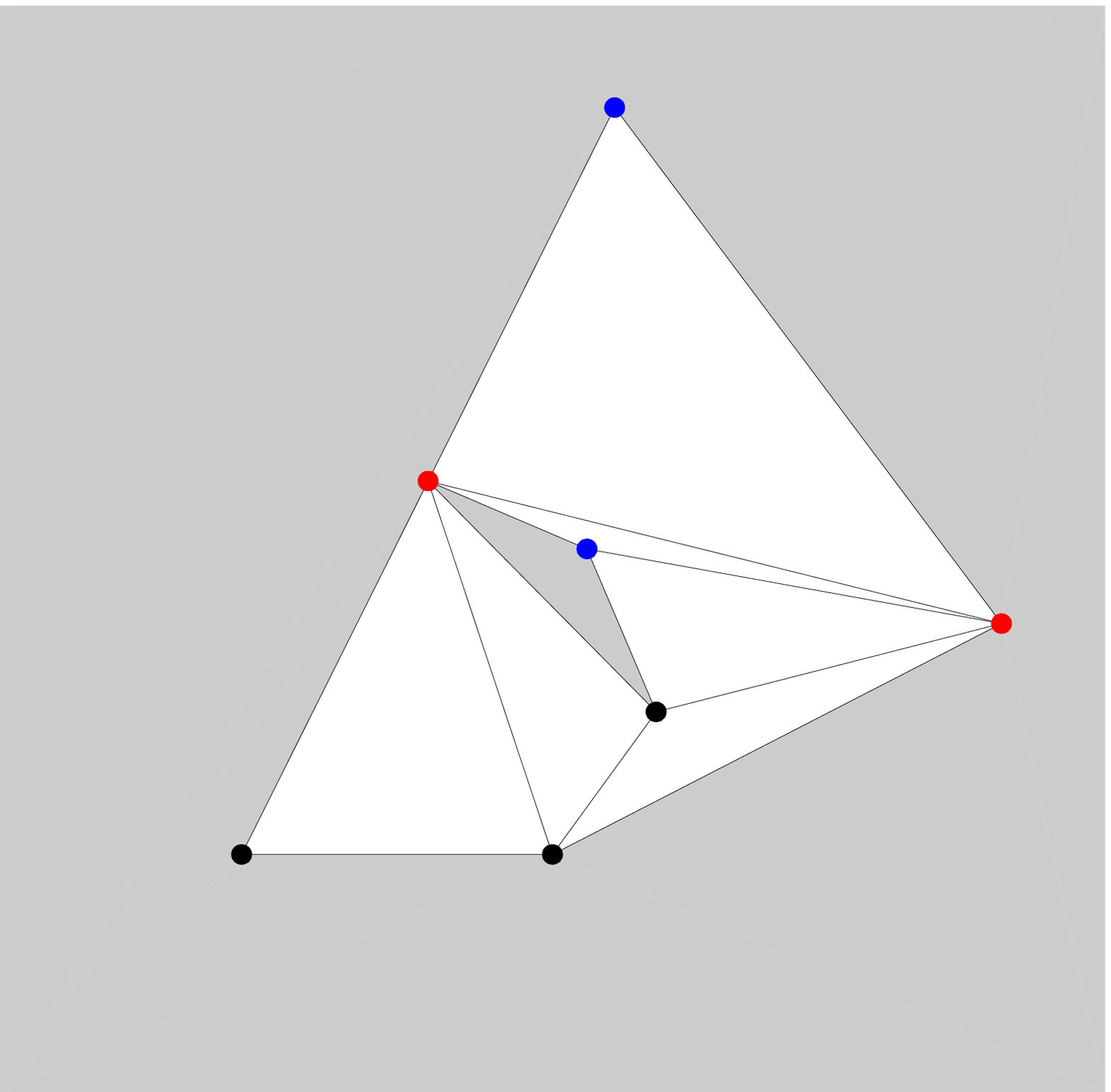,width=\x}
}
\vskip0.5cm
  \centerline{
\psfig{file=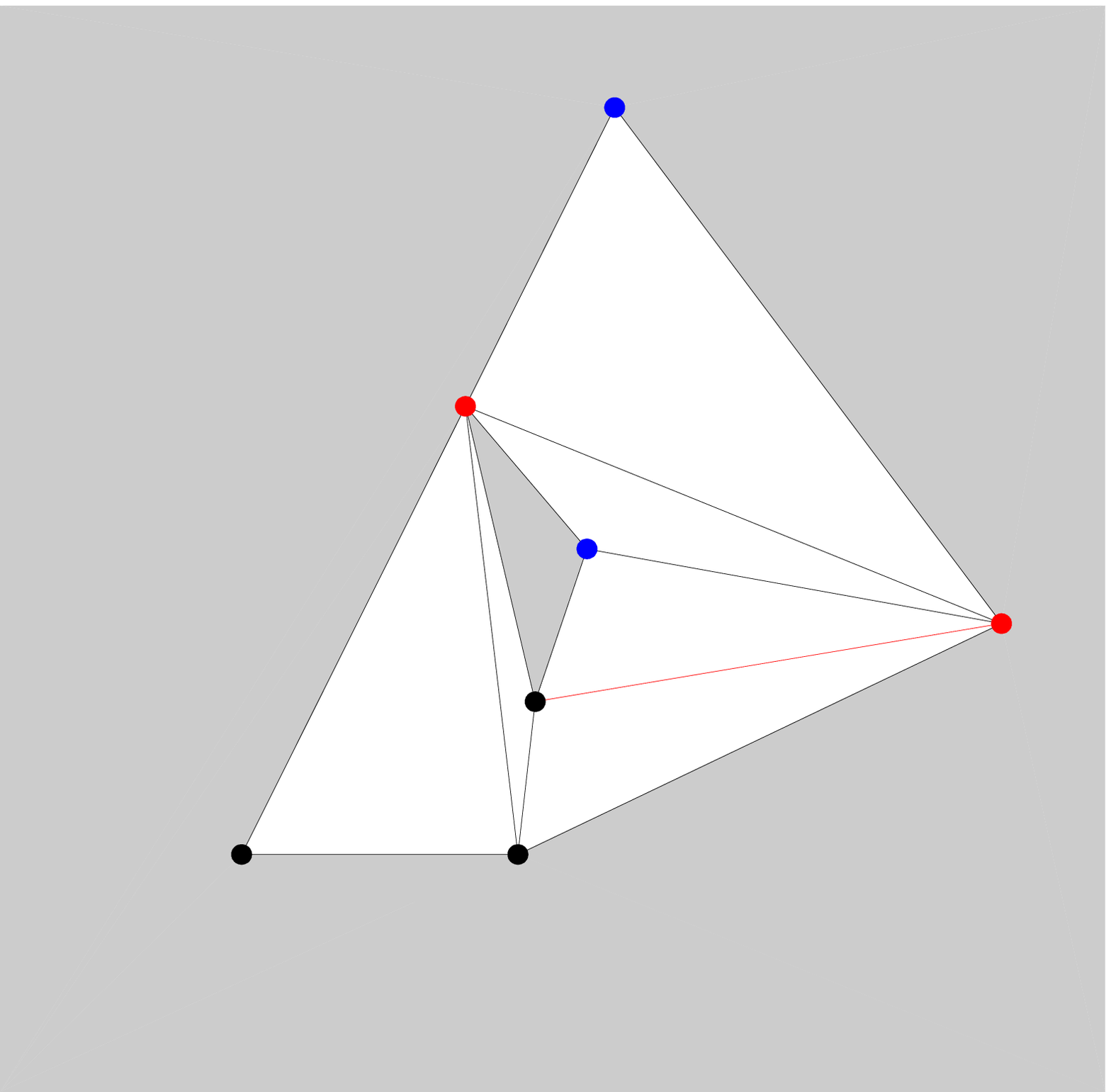,width=\x}\y
\psfig{file=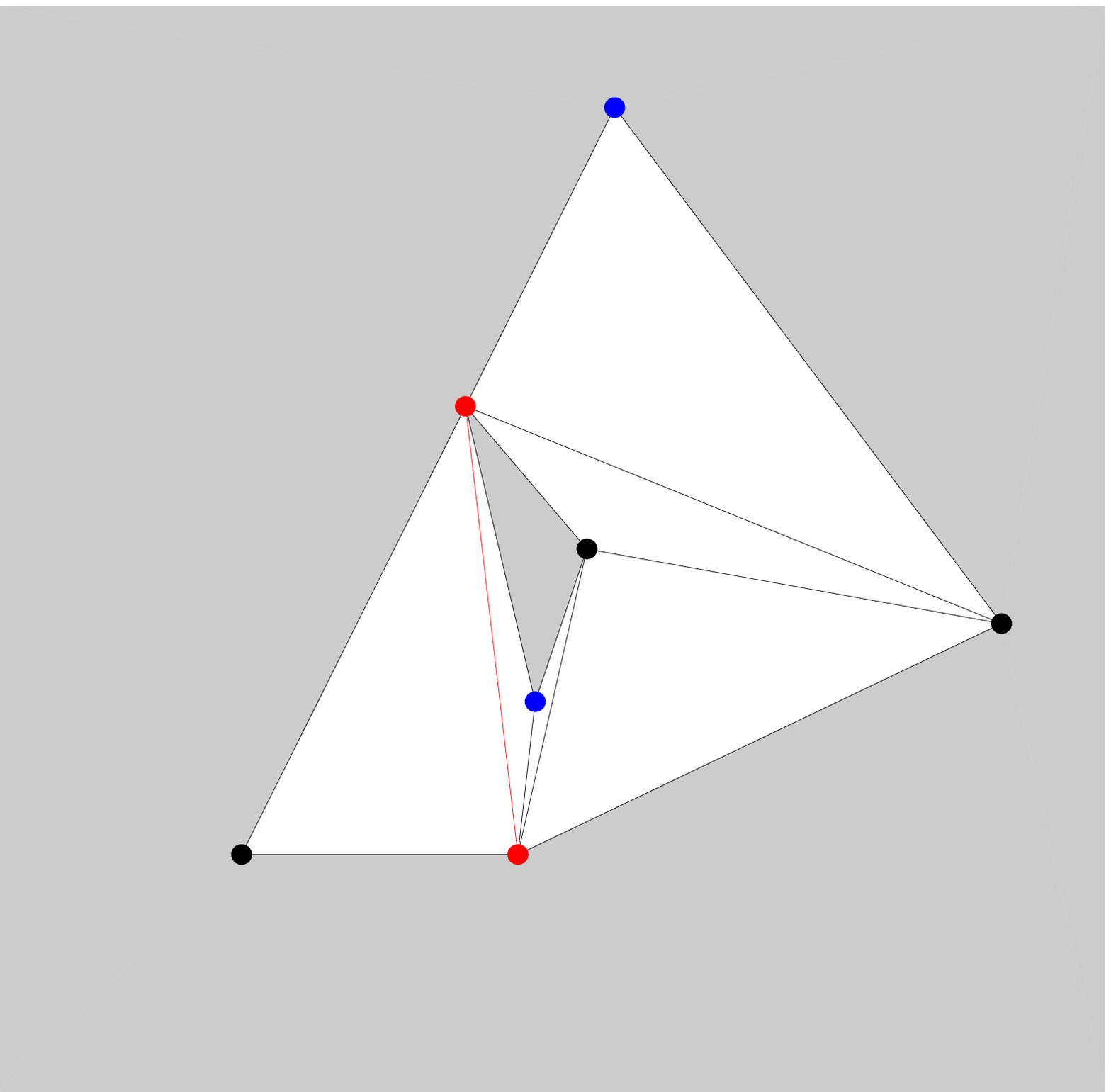,width=\x}\y
\psfig{file=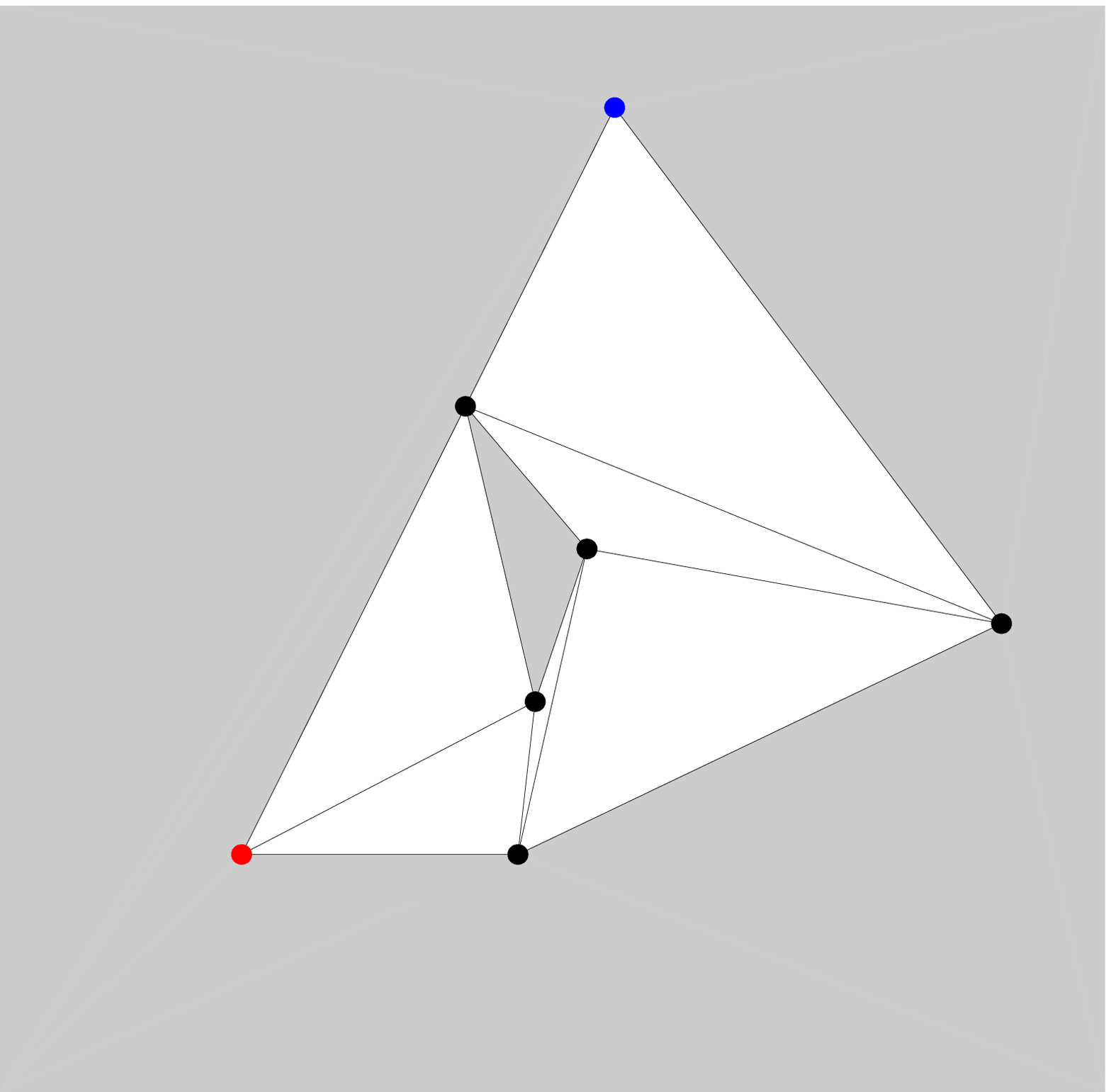,width=\x}
}
\caption{Motion of a defect by successive flips. Shown is a sequence of
  moves which lowers the degree by 1 on the top node and raises the
  degree by one on the bottom left node, leaving in the end all other
  degrees intact. Flips take only place between frames 2-3, 4-5, and
  5-6. The color code of the nodes is red for an increase of the degree
  by 1 and blue for decrease by one.}
\label{f:movedefect}
\end{figure}

Note that since defects can move, they can actually collide and
annihilate each other. In \sref{s:nature}, we will argue how this fits
nicely into a random walk picture, of a gas of annihilating defects 
combined with a rate of spontaneous generation of new defects.

\sect{Numerical results}{s:numerics}

\subsect{Time evolution of energy}{ss:energy}
We ran several simulations on triangulations of size $n=367$,
$1096$ and $3283$. The energy was defined as in
Eq.~\eref{e:energy}. All runs have been done starting from a fixed
initial configuration, with fixed temperature. 
The original triangulation is obtained by
  starting from a tetrahedron which is recursively subdivided by the
  insertion of tetrahedra. Thus, at level $\ell$ of recursion there
  are $4+\sum_{i=1}^\ell 3^i$ nodes. Our runs are for levels 5 to 7. 

In \fref{f:energy} we illustrate the typical scenario for the
evolution of the energy as a function of the number of flips, at
temperature $T=0.175$. After a short initial phase, there is a marked
decrease of the energy until it reaches an order of about 100 (for the
sizes of our triangulations).
Then a slow decay sets in until an equilibrium value is reached.

The absolute time scales seem proportional to $n$. So we can produce a
first data collapse by rescaling time by a factor $3283/n$. This is
done in \fref{f:all}.
\def\x{8cm}
\begin{figure}[!htb]
  \centerline{\psfig{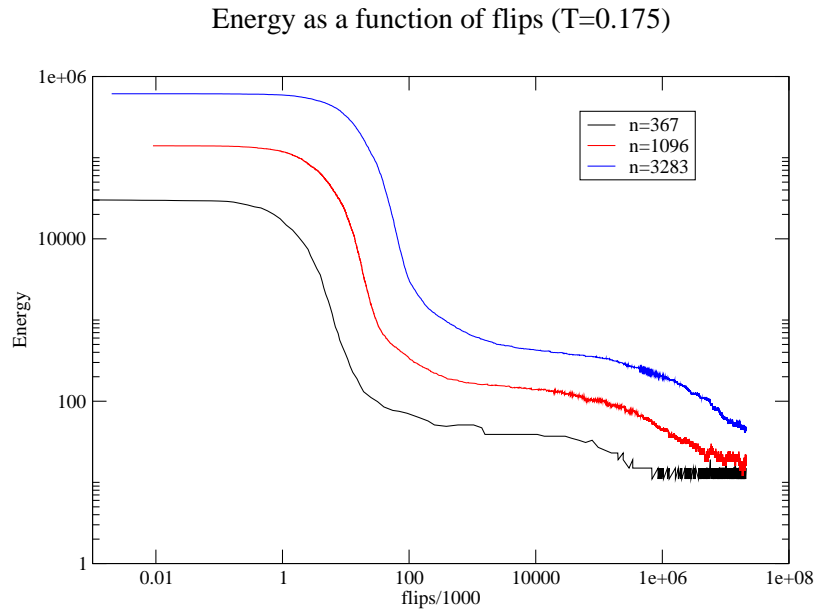}}
\caption{The energy as a function of the number of flips. The vertical
  scale is energy.}
\label{f:energy}
\end{figure}
\begin{figure}[!htb]
 \centerline{\psfig{file=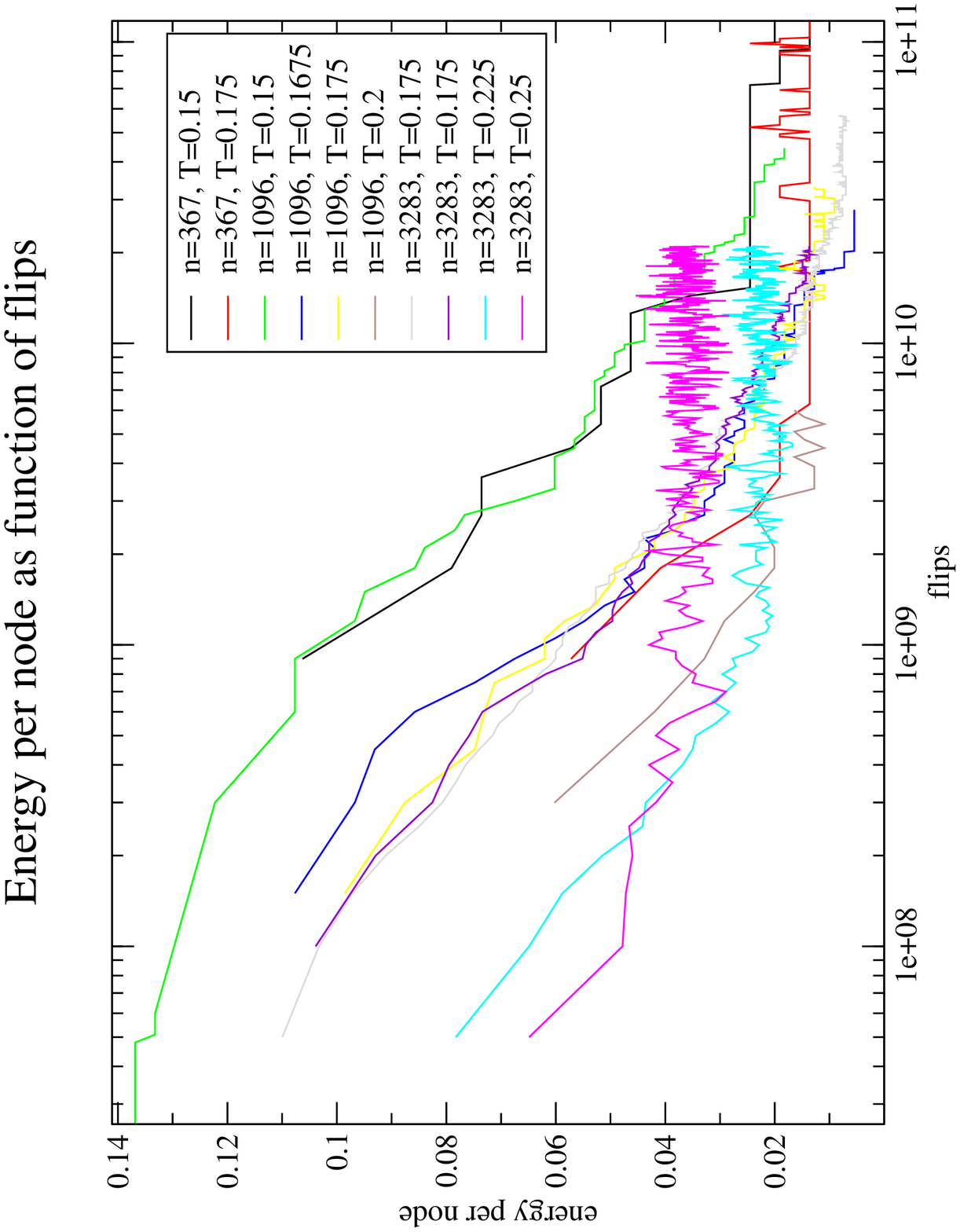,width=\x,angle=270}}
\caption{The data for $n=367$, $1096$, and $3283$ and $1/\beta=T$ as
  shown in legend.
  The $x$ axis is the logarithm of number of attempted flips,
  normalized by $3283/n$, and the $y$ axis energy per node after
  having subtracted a zero-point energy $E_0=6$ from all the
  energies. Note a certain data collapse for each temperature $T$,
  independent of size $n$.}
\label{f:all}
\end{figure}

Glassy slowing down is demonstrated in \fref{f:energyend}.
Here we concentrate on the phase just before equilibrium is
reached. With very good quality, one finds an exponential slowing down.
The energy behaves like the logarithm of the number of flips.
More precisely, before saturation, we find 
laws of the form (between $3\cdot 10^8$ and $3\cdot10^{10}$ flips):
\begin{equa}[e:scaling]
  (E_{367}-6)/367&= 0.44958-0.01897\log(3283/367\cdot{\text{flips}})~,\\
  (E_{1096}-6)/1096&=0.46169 -0.019362 \log(3283/1096\cdot{\text{flips}})~,\\
  (E_{3283}-6)/3283&=0.44554 -0.18652  \log({\text{flips}})~.\\
\end{equa}
The correlation coefficients of these fits increase from 0.977 to 0.997.
\begin{figure}[!htb]
  \centerline{\psfig{file=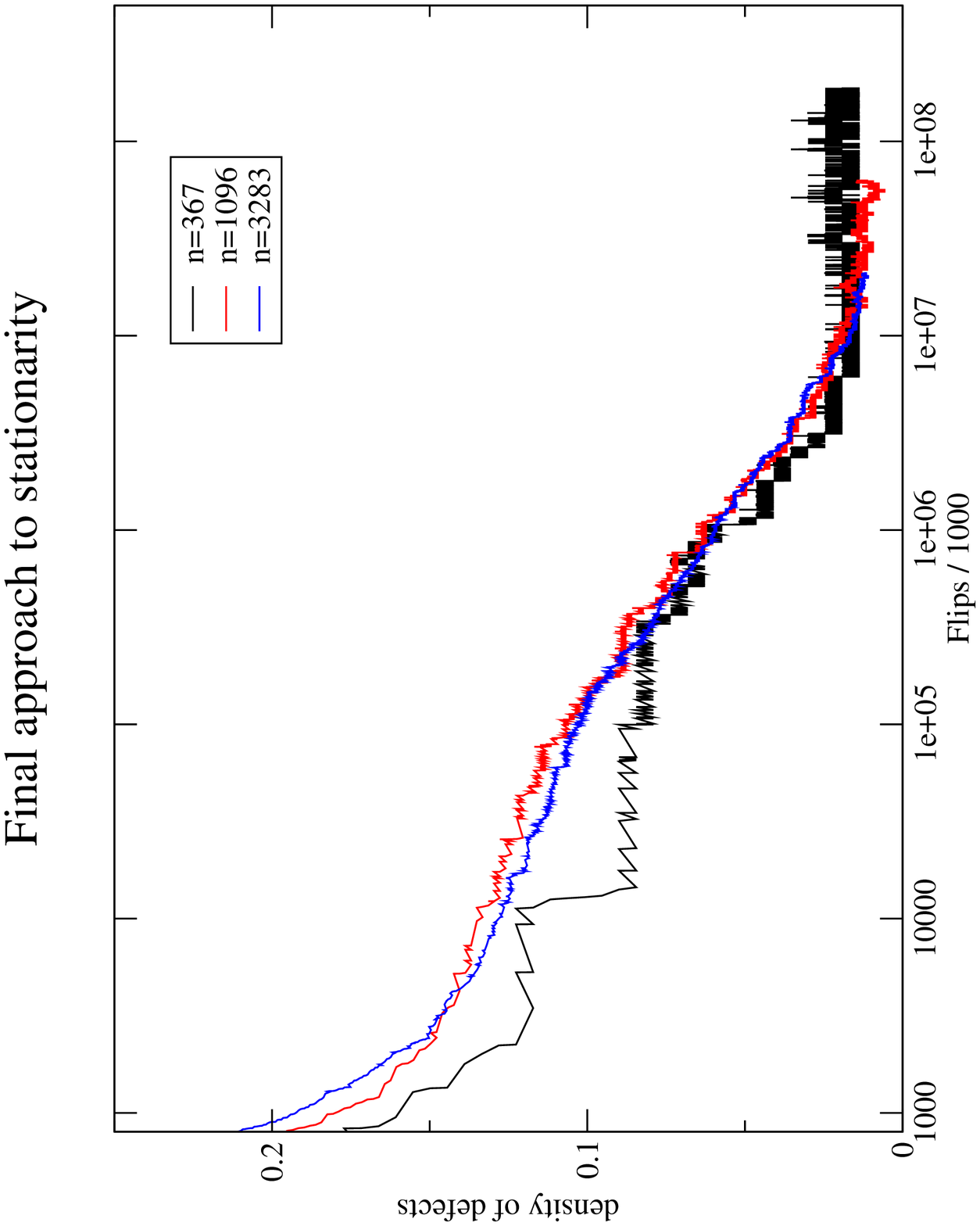,width=\x,angle=270}}
\caption{The energy as a function of the number of flips on approach
  to equilibrium. The vertical scale is density of defects, \ie,
  defects$/$nodes. The horizontal axis is the number of flips,
  renormalized by $3283/n$, \ie, the smaller triangulations converge
  faster than the larger ones. However, the \emph{slope} is the same
  for the 3 cases. We have subtracted 6 from the energies,
  to take into account the 0 point energy (which is at least 6).}
\label{f:energyend}
\end{figure}

\begin{remark}
  The minimal energies we have seen at $T=0.175$ are summarized in
  Table~\ref{tab:1}.
\end{remark}

\begin{table}
\small
$$  \begin{tabular}{|r|c|c|c|c|c|c|c|}
\hline
    $n$& 4 instead of 5 & 5 & 6 instead of 5 & 6 instead of 7 & 7 & 8
    instead of 7  & total energy  \\
\hline
    367&  6 & 178 & 0 & 5& 178 & 0 & 11\\
    1096&  10 & 536 & 2 & 4& 544 & 0& 16\\
    3283&  8 & 1629&  5 & 11& 1627 & 3& 27\\
\hline
  \end{tabular}
$$\caption{Some states with very low energy which have been found in the
  simulations. The numbers indicate the number of nodes of given
  degree. After the first column we give the counts for the red nodes
  (which want to have degree 5) and then for the blue nodes (which want
  to have degree 7).} 
\label{tab:1}
\end{table}

\subsect{Local minima and the ultrametric property}{ss:minima}
We have also studied the local neighborhood of typical triangulations
(at $T=0.175$). The results are shown in \fref{f:neighborhoods}. The
question we answer here is as follows. Take a ``typical'' low-energy
triangulation and compute for every possible flip starting from this
triangulation the change of energy $\delta E$ which that flip would
generate. For a triangulation with $n$ nodes there are in general
$3n-6$ possible 
choices of the edge which is going to be flipped, as discussed in
\sref{phasespace}. The figure shows the number of these flips which
change the energy by $\delta E= -2,0,\dots,10$, averaged over 200--700
states (depending on $n$). The values are expressed as probabilities.
In fact, the fluctuations between samples are very small and basically every single
sample has the same distribution. Also note that this distribution is largely
independent of the size of the system, except that very rare events
are absent in the smaller triangulations.

In terms of the energy landscape on the graph $\GG$ of
\sref{phasespace} this means that every point on $\GG$ which is a
typical glass state is almost a local minimum. In almost 100\% of all
directions (flips) leaving a given point, the energy grows by 4, with
fewer and fewer directions with different growth. Only about $0.0004$
of all directions are energy neutral, and the probability to find a
direction in which the energy {\em decreases} is only about $6.6\cdot
10^{-7}$.
This means that the probability to find a saddle point (increase and
decrease of energy possible) for $n=3282$ is only about $0.002$. In
other words, only about 1 in 500 of the sampled states is not a local minimum,
but actually a saddle.  One can understand these numbers by
observing that flipping a link in a region where the four affected
nodes have the ``right'' degree (namely, either 5 or 7, depending on
color) will cost 4 units of energy. Since most nodes have this
property in the stationary state, a gain of four is the normal
situation. The much rarer other energy changes are possible if a link
is flipped in a region with a defect, and those are very rare. It
should be possible to quantify all this as a function of temperature,
\ie, as a function of the density of defects.

\begin{figure}[!htb]
  \centerline{\psfig{file=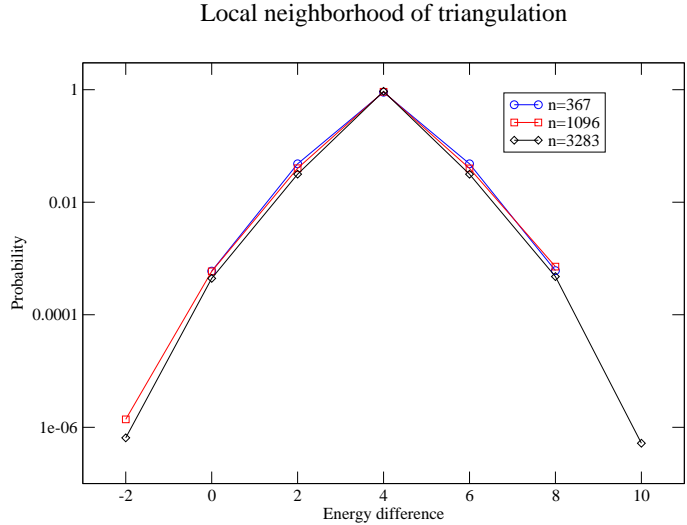,width=\x,angle=270}}
\caption{The neighborhood of typical glassy configurations. See the text for explanations.}
\label{f:neighborhoods}
\end{figure}

One can push this picture somewhat further and show that the local
minima are relatively deep. Indeed, to study the local neighborhood we
looked at all possible movements from the current state to its $3n-6$
neighbors (or slightly less when there are tetrahedra around). Note
now that the next choice of a link for the next motion will, with high
probability, affect nodes which were \emph{not} touched in the first
move. Therefore, with high probability, there is another increase of
energy by 4 units for this second step. This can go on for many more
flips. For example, if we do another $n^{1-\epsilon} $ flips, with
$\epsilon >0$, then they will all imply (with high probability) new
nodes, and each such step will increase the energy by 4. Thus, the
local minima are in troughs at least $4 n^{1-\epsilon }$ deep (with
very few directions with less increase). So the local minima are
surrounded by walls at least $4 n^{1-\epsilon }$ high, in most
directions. Note that this fact is intimately related to the topology
of the graph $\GG$.

Thus, the following picture emerges, leading to the
familiar ultrametric scenario. Any state with very few defects is
basically a local minimum. Only very well-chosen  ``exit'' directions
from such a 
state do not increase the energy. Going two steps away from the
original state, the probability of finding an exit without energy
increase is approximately the square of that finding such an exit when
doing one state. And this picture will repeat for a number of $o(n)$
steps, so that each local minimum is surrounded by walls of height
$o(n)$ and exits of probability $\OO(1/n)^k$ without increase of
energy in $k$ steps  (and probability $\OO(1/n)^{k-\ell}$ for an
energy increase of $4\ell$). Note that these observations depend only on the
short range behavior of the energy function, in our special case, the
constant 4, but not on the large scale growth of, say $(d_i-5)^2$. One
expects that these local minima will become equilibrium states, and
this is how the ultrametric property appears in this model.

\subsect{Temporal correlations}{ss:correlations}

Here we present some measurements of temporal correlations. By this one
means that one compares the triangulation at time $t$ to triangulations
at time $t+\delta t$ (it is well-known that this is the right aging
approach, see \eg, \cite{benarous2006}. The distance $D(\T,\T')$ between two
triangulations $\T$ and $\T'$ is
defined as the number of (numbered) nodes which have different degrees or
different neighbors.\footnote{More precisely, we represent the
  triangulations by fixing a certain numbering of nodes, and by
  enumerating for each node its neighbors in counter clockwise
  order. When doing comparisons, we compare these representations, as
  obtained in the simulations.}
This measure is mathematically not quite right,
since two triangulations which only differ in a renumbering of the
nodes (respecting color) would be considered equal in $\TT_n$ but
unequal here. The advantage of the current definition is that it is
very easy to implement. (A better measure would be the minimum of the
distance over all permutations of the numbering of the nodes, or the
shortest distance between $\T$ and $\T'$.) These quantities are illustrated in
Figs.~\ref{f:correlation.8_0.175} and \ref{f:correlationfromfixedtime}.

 \begin{figure}[!htb]
  \centerline{\psfig{file=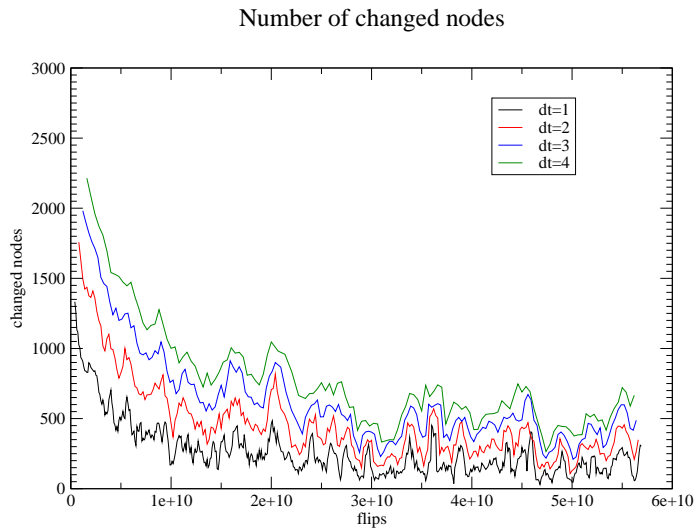,width=\x,angle=270}}
\caption{Number of nodes which \emph{differ} between time $t$ (in flips) and
  time $t+10^8\cdot dt$. The data are for $n=3283$ and $T=0.175$.}
\label{f:correlation.8_0.175}
 \end{figure}
 \begin{figure}[!htb]
  \centerline{\psfig{file=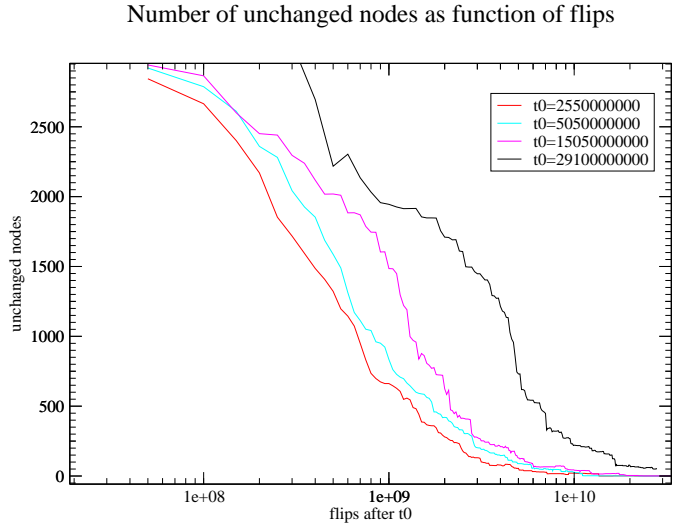,width=\x,angle=270}}
\caption{Number of nodes which are \emph{unchanged} as a function of
  $dt$. The vertical axis is $3283-D(t_0,t_0+dt)$, with $t_0$ given in the
  caption and $D$ defined in the text. The data are for $n=3283$ and
  $T=0.175$.} 
\label{f:correlationfromfixedtime}
 \end{figure}

\subsect{Spatial correlations}{ss:scorrelations}

Here, we compare two spatial correlations, one for the regular torus
triangulation, and the other for a state of the triangulation taken
towards the end of the run. These correlations are measured with a
technique known from quantum gravity (with dynamical triangulations 
of $S^4$), see \cite{bakkersmit1997}. The correlation function $C(r)$
at distance $r$ is
defined as 
\begin{equ}
C(r)= \frac{\sum_{ij: {\rm dist}(i,j)=r} (d_i-\bar d)(d_j -\bar
 d)}{\sum_{ij: {\rm dist}(i,j)=r}1}~, 
\end{equ}
where $d_i$ is the degree at $i$, $\bar d$ is the mean degree, and the
distance between two nodes $i$ and $j$ is defined as the minimal
number of hops to get from $i$ to $j$.
The power spectrum is then the amplitude of the Fourier transform of this quantity.

In \fref{f:scorrelations} we show that the
power spectrum of the regular triangulation has, as expected, a peak,
while the one for the glassy phase shows no structure at all. The
precise data are as follows: The torus triangulation is regular as
described in Remark~\ref{r:torus}, with 0 energy, and 3600 nodes.
The glassy triangulation is a typical state of a simulation done with
3283 nodes, at temperature $T=0.175$.

\begin{figure}[!htb]
 \centerline{\psfig{file=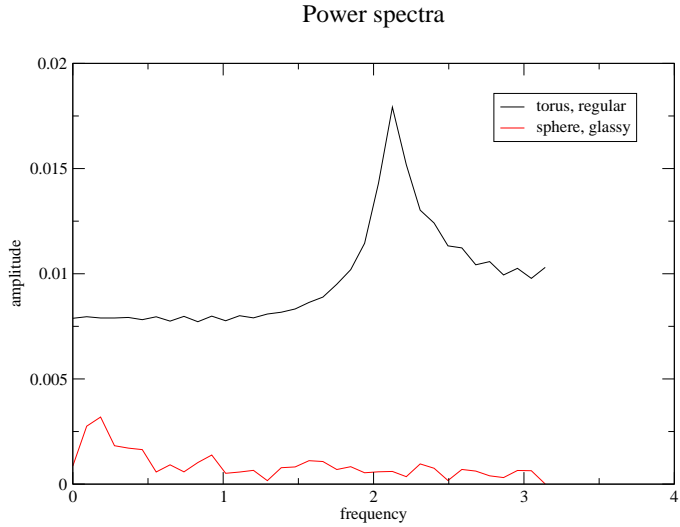,width=\x,angle=270}}
\caption{Power spectra for the torus triangulation and a glassy
  triangulation, corresponding to the equilibrium at $T=0.175$.} 
\label{f:scorrelations}
\end{figure}

\sect{A random walk interpretation}{s:nature}
Glass models can be classified in largely two different classes, and,
at present it is not clear whether these two classes are the same or are
different. The first class can be called the ``deep valley''
class. People imagine a landscape with increasingly deep valleys and
the random walk enters them, and, the deeper the valley one finds, the
harder it is to get out of the valley to find eventually an even
deeper one. The second class can be called ``narrow corridors''
class. Here, the valleys are rather flat, and, while the shortest
distance between two points might be quite short, it might be very
difficult to find a path which has a small total variation in height,
going from one configuration to another. The model in \cite{Proglass2007} and
this one seem to be of this second class.

How should one view our the model at very low temperature? Most odd (blue) nodes
will have a degree 7 while most even (red) ones have degree 5, if the
temperature is low enough. Furthermore, there will be a density of
``defects'' that is, odd ones of degree different from 7 and even ones
of degree different from 5, as seen in Table~\ref{tab:1}.
Experimentally, what happens is that at
low temperature only 6's and 8's occur for the odd ones and 4's and 6's
for the even ones. We can thus view these defects (which all cost
energy 1 each) as a gas of low density. At a given temperature, 4
defects are usually created by flipping an edge in a region with no
defects (all 4 corners will acquire a ``wrong'' degree). One new
defect can be created when one flips a link which connects to only one
old defect. The
probability of this happening (per flip) is proportional to
$\exp(-4\beta)$ (resp.~$\exp(-2\beta)$.)   Thus, 4- (or 2-) tuples are randomly created at this rate. On
the other hand, defects can wander (painfully) through the
triangulations as we have shown above in \sref{s:defects}.
When 2 or 4 of them meet they can annihilate, and the final density of defects
as a function of $\beta $ should be obtainable as the equilibrium
between creation and annihilation of these defects. Note that, since
annihilation lowers the energy, this will happen with a rate 1
whenever they meet, while creation happens with the much smaller rate
$\exp(-4\beta)$. 

\begin{remark}
Our discussion of defects differs from that of \cite{Proglass2007}. In
that paper, most particles live, at low temperature, in a hexagon.
Any red particle in a pentagon (or blue particle in a 7-gon) is then
called a \emph{glass-defect}, while all other cases are called
\emph{liquid-defects}. In contrast, in our model the natural thing is
to have red particles in pentagons and blue ones in 7-gons, and
defects are any coordination numbers different from 5 or 7. In
particular, a hexagon is a defect in our model. Given our earlier
discussion, at the temperatures we consider, all defects which appear
in the simulations would correspond to glass-like defects. The
following discussion can the be seen as a variant of
Eqs.~(2) and (3) in \cite{Proglass2007}.
\end{remark}
\subsection{A toy model}\label{ss:toy}

One can study the density of defects in a simplified model which is basically
exactly solvable\footnote{Yuval Peres and Bernard Derrida kindly
  explained to me how one discusses such models, and also suggested
  the precise law.}. The model is as follows: Take a square lattice
(sublattice of $\integer^2$) of
size $N\times N$ with periodic boundary conditions. Each site of the
lattice can be either empty or filled with one particle. Fix a
constant $\rho$ (this mimics $\exp(-4\beta)$). The Markov process
consists in choosing at random one of the sites. 
\begin{enumerate}
\item{}If it is filled, move
the particle randomly in one of the 4 directions to the next site. If
the target site is occupied, the particles annihilate each other. If not, the
particle stays at the new site.
\item{}If the site is empty, create a new particle there with
  probability $\rho$.
\end{enumerate}
The conjecture is that the equilibrium density of the particles, 
for $N\to\infty$ and small $\rho$, should behave like
\begin{equ}[asymptotic]
\const \rho^{1/2} | \log \rho |^{1/2}~.
\end{equ}
\begin{remark}
  We have checked this law for $N=100$, with a very good fit (see \fref{f:rhologrho}).
\end{remark}
\begin{figure}[!htb]
    \centerline{\psfig{file=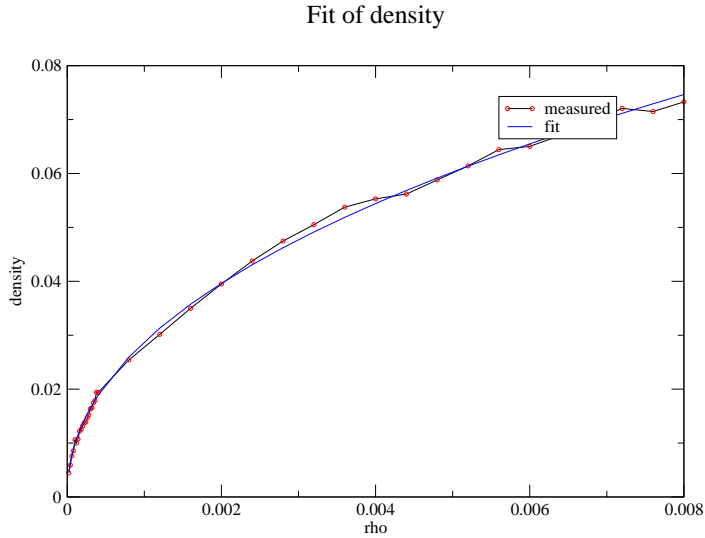,width=8.2cm,angle=270}}
\caption{The asymptotic density of walkers for the model of \sref{ss:toy} as a
 function of the temperature. The theoretical curve is
 $f(\rho)=a\rho^{1/2}|\log(b\cdot\rho)|^{1/2}$ with $a=0.252$ and $b=0.00217$.
 }
\label{f:rhologrho}
\end{figure}
The connection between this model and our model of a glass is almost obvious.
The migration of defects was discussed in \sref{s:defects}, see also
\fref{f:movedefect}. The only difference here is that moves in the
glass model are slower, since perhaps the energy will increase on the
way from a position to the next. But this only changes the time scale
of the moves of defects. The creation of defects takes usually place
either in a region where there is no defect nearby, and then the
energy increases by 4, hence the probability of this happening will be
$\exp(-4\beta)$. But perhaps other such creations will only need
energy 2, and this is not covered by the toy model. The toy model is
on a lattice $\integer^2$ while the defect model is on the triangulation
---not on the set $\TT_n$ of triangulations---since we talk here about
motion of defects, viewed as \emph{independent}, unless they
collide. Therefore, what can be reasoned on $\integer^2$ transposes to
the triangulation, since both are locally transient. Therefore, we
conjecture that for the topological glass model, the density of
defects should behave like \eref{asymptotic}, with $\rho$ of the form
$\rho =\exp(-\beta C)$, for some $C$. We have not
been able to verify this in the simulations.

Coalescing and annihilating random walks are discussed in various
places, see \eg, \cite{bergkesten2000}.

\sect{Conclusions and outlook}{conclusion}

In this paper, we have discussed a variety of properties of a
glass-like model. These properties show that the glass-like behavior
can be obtained without reference to position, but already in a
discrete phase
space (given by the graph $\GG$ of triangulations and their
connections through flips). Furthermore, the energy landscape and its
concomitant slowing down of motion to equilibrium, seem to depend
mostly only on the cost of \emph{local} energy changes, and are thus
universal. The \emph{global} structure is in fact hard-wired into the
graph $\GG$. It would be interesting to see whether the equilibrium
states can be mapped back into a physical space, for example by
mapping the triangulation onto the disk in such a way that every point
is away from every other point by at least the same minimal distance
$r$, and to compare the result to those obtained with classical potentials.

\begin{acknowledgement}
We have profited from very useful discussions with G. Ben Arous,
B. Derrida, Th. Giamarchi, N. Linial, Y. Peres, and I. Procaccia. This
work was partially 
supported by the Fonds National Suisse. 
\end{acknowledgement}
\bibliographystyle{JPE}
\markboth{\sc \refname}{\sc \refname}
\bibliography{refs}

\def\Rom#1{\uppercase\expandafter{\romannumeral #1}}\def\u#1{{\accent"15
  #1}}\def\cprime{$'$} \def\cprime{$'$}
\begin{thebibliography}{10}

\bibitem{Proglass2007}
E.~Aharonov, E.~Bouchbinder, H.~G.~E. Hentschel, V.~Ilyin, N.~Makedonska,
  I.~Procaccia, and N.~Schupper.
\newblock Direct identification of the glass transition: Growing length scale
  and the onset of plasticity.
\newblock {\em Europhysics Letters\/} {\bf 77} (2007), 56002.

\bibitem{benarous2006}
G.~Ben~{A}rous and J.~\v{C}ern\'y.
\newblock Dynamics of trap models.
\newblock In: A.~Bovier, F.~Dunlop, A.~van {E}nter, and J.~Dalibard, eds., {\em
  Mathematical Statistical Physics, Les Houches, Session LXXXIII\/} (Elsevier,
  2006).

\bibitem{CE2005}
P.~Collet and J.-P. Eckmann.
\newblock Dynamics of triangulations.
\newblock {\em J. Stat. Phys.\/} {\bf 121} (2005), 1073--1081.

\bibitem{bakkersmit1997}
B.~de~Bakker and J.~Smit.
\newblock Gravitational binding in 4d dynamical triangulation.
\newblock {\em Nucl. Phys. B\/} {\bf 484} (1997), 476--492.

\bibitem{Malyshev1999}
V.~A. Malyshev.
\newblock Probability related to quantum gravitation: planar gravitation.
\newblock {\em Uspekhi Mat. Nauk\/} {\bf 54} (1999), 3--46.

\bibitem{Mori2003}
R.~Mori, A.~Nakamoto, and K.~Ota.
\newblock Diagonal flips in {H}amiltonian triangulations on the sphere.
\newblock {\em Graphs Combin.\/} {\bf 19} (2003), 413--418.

\bibitem{Negami1999}
S.~Negami.
\newblock Diagonal flips of triangulations on surfaces, a survey.
\newblock In: {\em Proceedings of the 10th Workshop on Topological Graph Theory
  (Yokohama, 1998)\/}, volume~47 (1999).

\bibitem{Tutte1962}
W.~T. Tutte.
\newblock A census of planar triangulations.
\newblock {\em Canad. J. Math.\/} {\bf 14} (1962), 21--38.

\bibitem{bergkesten2000}
J.~van~den Berg and H.~Kesten.
\newblock Asymptotic density in a coalescing random walk model.
\newblock {\em Ann. Probab.\/} {\bf 28} (2000), 303--352.

\bibitem{Wagner1936}
K.~Wagner.
\newblock Bemerkungem zum {V}ierfarbenproblem.
\newblock {\em Jber. Deutsch. Math-Verein.\/} {\bf 46} (1936), 126--132.

\bibitem{Watts1998}
D.~J. Watts and S.~H. Strogatz.
\newblock Collective dynamics of small-world networks.
\newblock {\em Nature\/} {\bf 393} (1998), 440--442.

\end{thebibliography}
\end{document}